\def\eqref#1{equation~\ref{#1}}
\def\1{\bm{1}}
\DeclareMathAlphabet{\mathsfit}{\encodingdefault}{\sfdefault}{m}{sl}
\SetMathAlphabet{\mathsfit}{bold}{\encodingdefault}{\sfdefault}{bx}{n}
\theoremstyle{plain}
\newtheorem{theorem}{Theorem}[section]
\newtheorem{lemma}[theorem]{Lemma}
\theoremstyle{definition}
\newtheorem{definition}[theorem]{Definition}
\theoremstyle{remark}
\newtheorem{remark}[theorem]{Remark}
\newcommand{\vect}[1]{\boldsymbol{#1}}
\newcommand*\samethanks[1][\value{footnote}]{\footnotemark[#1]}
\title{Towards More Accurate Full-Atom Antibody Co-Design}
\author{
    Jiayang Wu$^{1}$\thanks{Equal contribution.} \quad Xingyi Zhang$^{2}$\samethanks[1]
    \quad Xiangyu Dong$^{3}$ \quad Kun Xie$^{3}$ \quad \textbf{Ziqi Liu}$^{4}$ \\ \hspace{0.1mm}  \textbf{Wensheng Gan}$^{1}$ \quad \textbf{Sibo Wang}$^{3}$ \quad \textbf{Le Song}$^{2}$ \\
    $^{1}$Jinan University \quad $^{2}$MBZUAI \quad $^{3}$The Chinese University of Hong Kong \\ 
    $^{4}$ University of Chinese Academy of Sciences\\
\hspace{0.1mm} \texttt{xingyi.zhang@mbzuai.ac.ae} \\
}
\begin{document}
\doparttoc 
\faketableofcontents 
\part{}

\maketitle

\begin{abstract}
Antibody co-design represents a critical frontier in drug development, where accurate prediction of both 1D sequence and 3D structure of complementarity-determining regions (CDRs) is essential for targeting specific epitopes. Despite recent advances in equivariant graph neural networks for antibody design, current approaches often fall short in capturing the intricate interactions that govern antibody-antigen recognition and binding specificity. In this work, we present Igformer, a novel end-to-end framework that addresses these limitations through innovative modeling of antibody-antigen binding interfaces. Our approach refines the inter-graph representation by integrating personalized propagation with global attention mechanisms, enabling comprehensive capture of the intricate interplay between local chemical interactions and global conformational dependencies that characterize effective antibody-antigen binding. Through extensive validation on epitope-binding CDR design and structure prediction tasks, Igformer demonstrates significant improvements over existing methods, suggesting that explicit modeling of multi-scale residue interactions can substantially advance computational antibody design for therapeutic applications.
\end{abstract}

\section{Main}

Antibodies are Y-shaped proteins that play a pivotal role in the immune system, capable of binding with high specificity to antigens, including pathogens and foreign molecules \cite{murphy2016janeway}. The design of antibodies for specific epitopes in antigens is an essential yet challenging task with broad applications in therapeutic development and drug discovery \cite{feldmann2003tnf,adams2005cancer,mullard2022fda,wang2024human}. 
The challenges arise primarily from the variability and structural complexity of complementarity-determining regions (CDRs), where binding primarily occurs \cite{he2024novo}. Understanding and modeling the intricate interactions between antibodies and antigens is crucial for developing an effective computational antibody design approach \cite{baran2017principles,tennenhouse2024computational,yan2024antibodies}.

\begin{wrapfigure}{r}{0.42\textwidth}
  \centering
  \hspace{-4mm}
  \includegraphics[width=\linewidth]{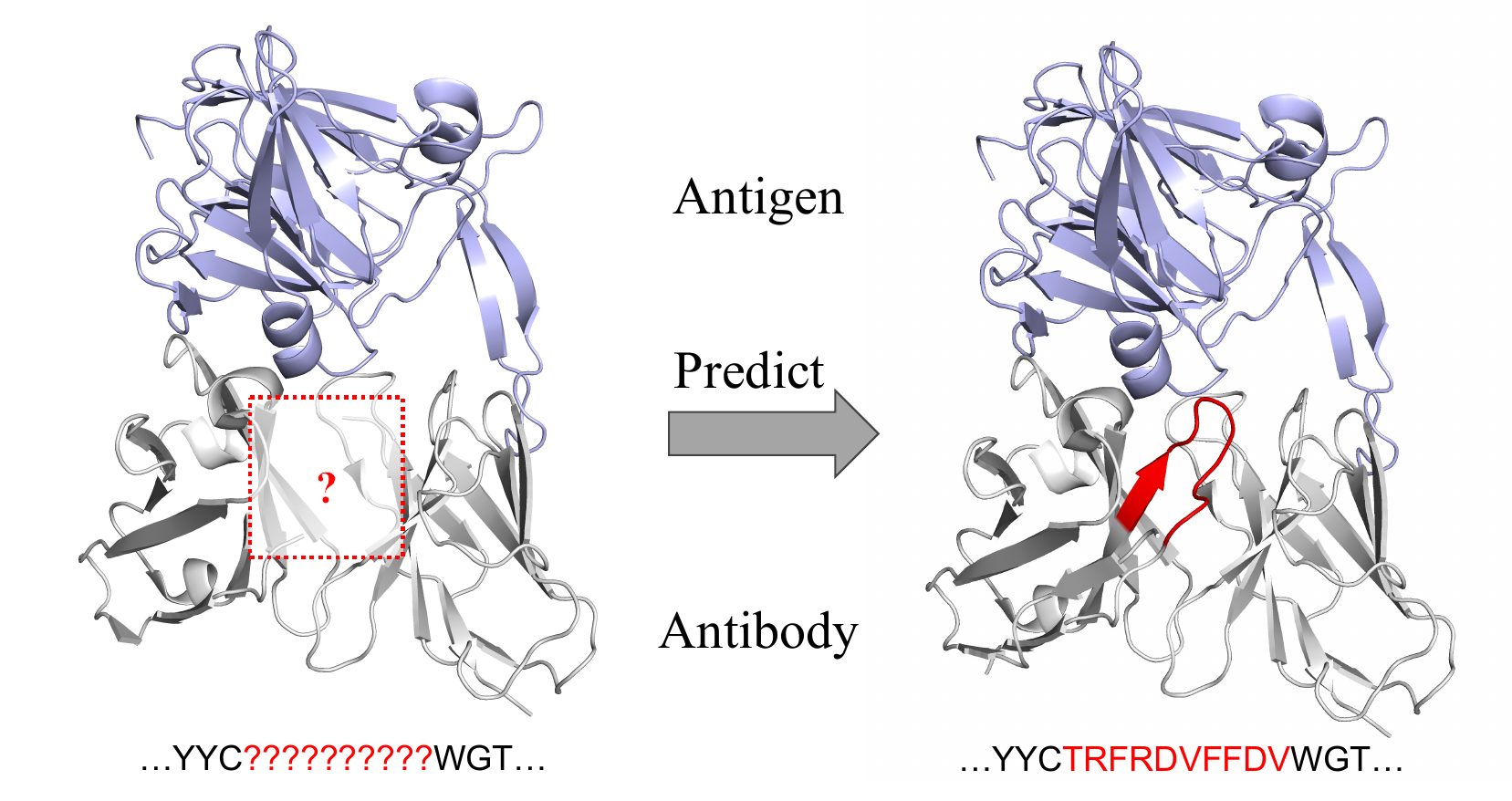}  \vspace{-3mm}
  \caption{End-to-end antibody co-design task.}  \label{fig:fig2-task}
  \vspace{-3mm}
\end{wrapfigure}

The past decade has witnessed a significant evolution in antibody design approaches. Traditional methods, including sequence-based models \cite{alley2019unirep,shin2021protein} and energy-based optimization frameworks \cite{pantazes2010optcdr,lapidoth2015abdesign,adolf2018rabd}, provided initial solutions but fell short of fully capturing the structural and functional interactions between antibodies and antigens \cite{ingraham2019generative}. 
More recently, learning-based approaches, particularly equivariant graph neural networks, deep generative models, and diffusion models \cite{saka2021antibody,jin2022refinegnn,watson2022broadly,luo2022diffab,kong2023dymean,martinkus2023abdiffuser,wang2024iggm}, have emerged as powerful tools for simultaneously co-designing CDR 1D sequences and 3D structures.
These models represent a significant advancement over traditional approaches by leveraging deep learning to model complex sequential and structural relationships. However, many of these methods simplify the problem by focusing exclusively on backbone modeling or specific CDR regions, overlooking side-chain interactions and the full-atom geometry that are crucial to accurately modeling antibody-antigen binding \cite{ruffolo2023fast,martinkus2023abdiffuser}.

Traditional computational antibody design relies on multi-stage pipelines that separately handle structure prediction, docking, and CDR generation. A recent work, dyMEAN \cite{kong2023dymean}, integrates these tasks into a unified end-to-end framework, providing a more streamlined approach to antibody co-design. However, despite these advances, dyMEAN does not fully take advantage of intricate antibody-antigen interactions, leaving opportunities for further improvement in capturing the detailed molecular interplay at binding interfaces. These limitations highlight the need for a more sophisticated approach that can simultaneously handle full-atom geometry and complex hierarchical interactions in antibody-antigen complexes. Further discussion of related works can be found in Appendix~\ref{appendix-sec:related-work}.

To address these limitations, in this work, we introduce Igformer\footnote{\em \underline{I}mmuno\underline{g}lobulin Trans\underline{former} (Igformer)}, a novel end-to-end framework for antibody co-design that advances antibody co-design through three key innovations. At its core, Igformer introduces a sophisticated antibody-antigen inter-graph refinement strategy that uniquely integrates personalized propagation with a global attention mechanism. 
The personalized propagation scheme models local chemical properties and geometric constraints, while the complementary global attention mechanism captures long-range structural dependencies. This novel dual-scale architecture enables Igformer to maintain atomic-level precision while effectively capturing intricate residue-level interactions within the antibody-antigen binding interface. Furthermore, Igformer incorporates E(3)-equivariance throughout the framework, ensuring that all structural predictions adhere to the physical and geometric constraints of biological systems.

Through comprehensive experimental evaluation, Igformer demonstrates superior effectiveness across multiple challenging antibody design tasks. For epitope-binding CDR design, Igformer shows a 2.02\% improvements over state-of-the-art models in terms of amino acid recovery rate. For antigen-antibody complex structure prediction, Igformer achieves remarkable accuracy with an 11.84\% reduction in RMSD compared to existing approaches. These substantial improvements validate the effectiveness of our technical innovations in addressing the challenges of full-atom geometry and sequence-structure co-design. Igformer thus represents a significant advancement in computational antibody design, offering promising new directions for accelerating drug discovery and therapeutic development.

\begin{figure*}[t]
\vspace{-2mm}
  \centering
  \includegraphics[width=\textwidth]{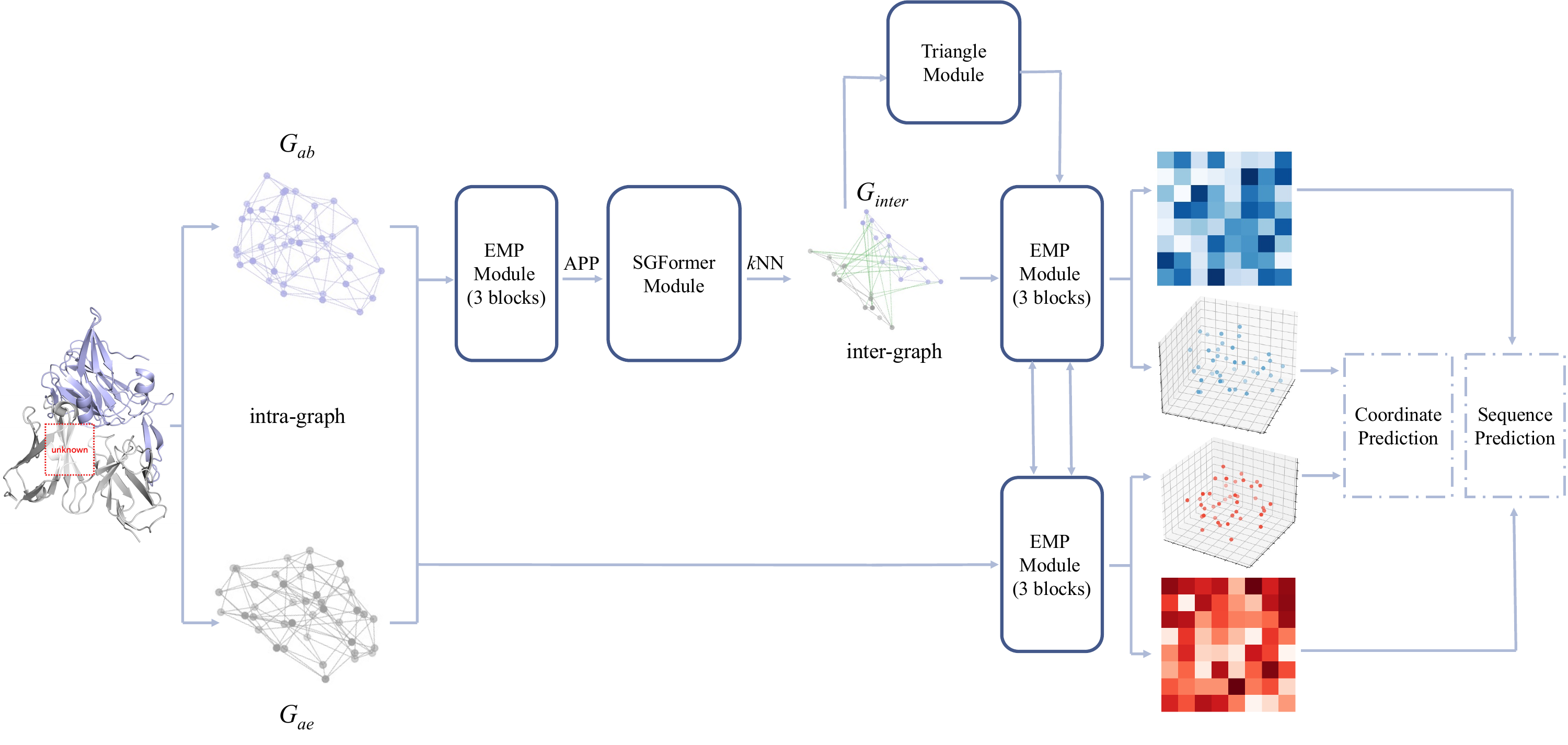}
  \vspace{-8mm}
  \caption{Framework of Igformer.}
  \label{fig:fig3-framework}
  \vspace{-4mm}
\end{figure*}

\section{Results}
\label{sec:experiment}
\subsection{IgFormer}
Figure~\ref{fig:fig3-framework} illustrates the overall framework of Igformer. Specifically, the {\em Equivariant Message Passing (EMP)} modules serve as foundational blocks of Igformer, integrating both spatial and biochemical properties during the message-passing process. Our theoretical analysis confirms the E(3)-equivariant property of EMP, which lays the theoretical foundation of Igformer.

The key innovation of Igformer lies in its sophisticated inter-graph refinement strategy. The process begins with the initialization of residue representations and coordinates, followed by intra-graph construction. Subsequently, based on the constructed intra-graph, Igformer captures the complex interactions within the binding range between antibody paratope and antigen epitope through two key modules: {\em Approximate Personalized Propagation (APP)}, which preserves local interaction information, and {\em Simplified Graph Transformer (SGFormer)}, which captures global binding dependencies. Next, the learned intricate binding patterns are utilized to refine the inter-graph, followed by {\em Triangle Multiplicative (TM)} module and {\em Axial Attention (AA)} module, which generate pairwise residue interactions through different mechanisms. Finally, the refined inter-graph, together with the intra-graph are processed through a {\em dual-scale EMP modules}, responsible for the final coordinate and sequence predictions. Systematical elaboration on our method and supplementary details can be found in Appendix~\ref{sec:framework} and appendix~\ref{appendix-sec:sup}.

{\bf Dataset.}
Following previous works on antibody design~\cite{jin2022hern,kong2023dymean,kong2023mean}, we train all models on the \textit{Structural Antibody Database} (SAbDab)~\citep{dunbar2013sabdab,schneider2021sabdab} using the snapshot from November 2022. We split the dataset into training and validation sets with a ratio of 9:1, yielding $3,246$ antibodies for training and $365$ antibodies for validation.
The antibodies are clustered based on their CDR-H3 regions using a $40\%$ sequence identity threshold, calculated using the BLOSUM62 substitution matrix \cite{foote1992antibody}. This clustering process, performed using MMseqs2, results in $1,644$ clusters in the training set and $182$ clusters in the validation set.
After that, we evaluate Igformer and other competitors on the RAbD benchmark \cite{adolf2018rabd} for Tasks 1-3, which consists of 60 diverse antibody-antigen complexes. For Task 4, Igfold benchmark \cite{ruffolo2023IgFold} consisting of $51$ antibody-antigen complexes is utilized. This test set selection prevents data leakage during the evaluation phase.
Detailed introduction to datasets and training settings of IgFormer are listed in Appendix~\ref{appendix-sec:dataset} and Appendix~\ref{appendix-sec:exp-setting}, respectively.

\begin{table}[t]
\vspace{-2mm}
\caption{CDR-H3 design. Results of models with * are collected from the dyMEAN paper.}
\small
\centering
\label{tab:task1-main}
\begin{tabular}{lccc|ccc}
\hline \hline
\multirow{2}{*}{Model} & \multicolumn{3}{c|}{Generation} & \multicolumn{3}{c}{Docking} \\
\cline{2-7}
& AAR$\uparrow$ & TMscore$\uparrow$ & IDDT$\uparrow$ & CAAR$\uparrow$ & RMSD$\downarrow$ & DockQ$\uparrow$ \\
\hline
RosettaAb*  & 32.31\% & 0.9717 & 0.8272 & 14.58\% & 17.70 & 0.137 \\
DiffAb*     & 35.31\% & 0.9695 & 0.8281 & 22.17\% & 23.24 & 0.158 \\
MEAN*       & 37.38\% & 0.9688 & 0.8252 & 24.11\% & 17.30 & 0.162 \\
HERN*        & 32.65\% & -      & -      & 19.27\% & 9.15 & 0.294 \\
dyMEAN      & 42.64\% & 0.9728 & 0.8438 & 27.35\% & 8.42 & 0.408 \\
\hline
Igformer    & \textbf{43.50\%} & \textbf{0.9757} & \textbf{0.8650} & \textbf{28.11\%} & \textbf{7.15} & \textbf{0.450} \\
\hline \hline
\end{tabular}
\vspace{-4mm}
\end{table}

\begin{table}[t]
\small
\centering
\caption{AAR (\%) on multiple CDRs design.}
\label{tab:task2-main1}
\begin{tabular}{l|cccccc}
\hline \hline
Model & CDR-L1 & CDR-L2 & CDR-L3 & CDR-H1 & CDR-H2 & CDR-H3 \\
\hline
dyMEAN    & \textbf{75.55} & 83.10 & 52.12 & 75.51 & 68.48 & 37.53 \\
Igformer  & 75.20 & \textbf{85.32} & \textbf{63.42} & \textbf{77.20} & \textbf{69.25} & \textbf{41.10} \\
\hline \hline
\end{tabular}
\end{table}

\begin{figure}[t]
\centering
\vspace{-2mm}
  \begin{small}
    \begin{tabular}{cc}
        \includegraphics[width=0.32\linewidth]{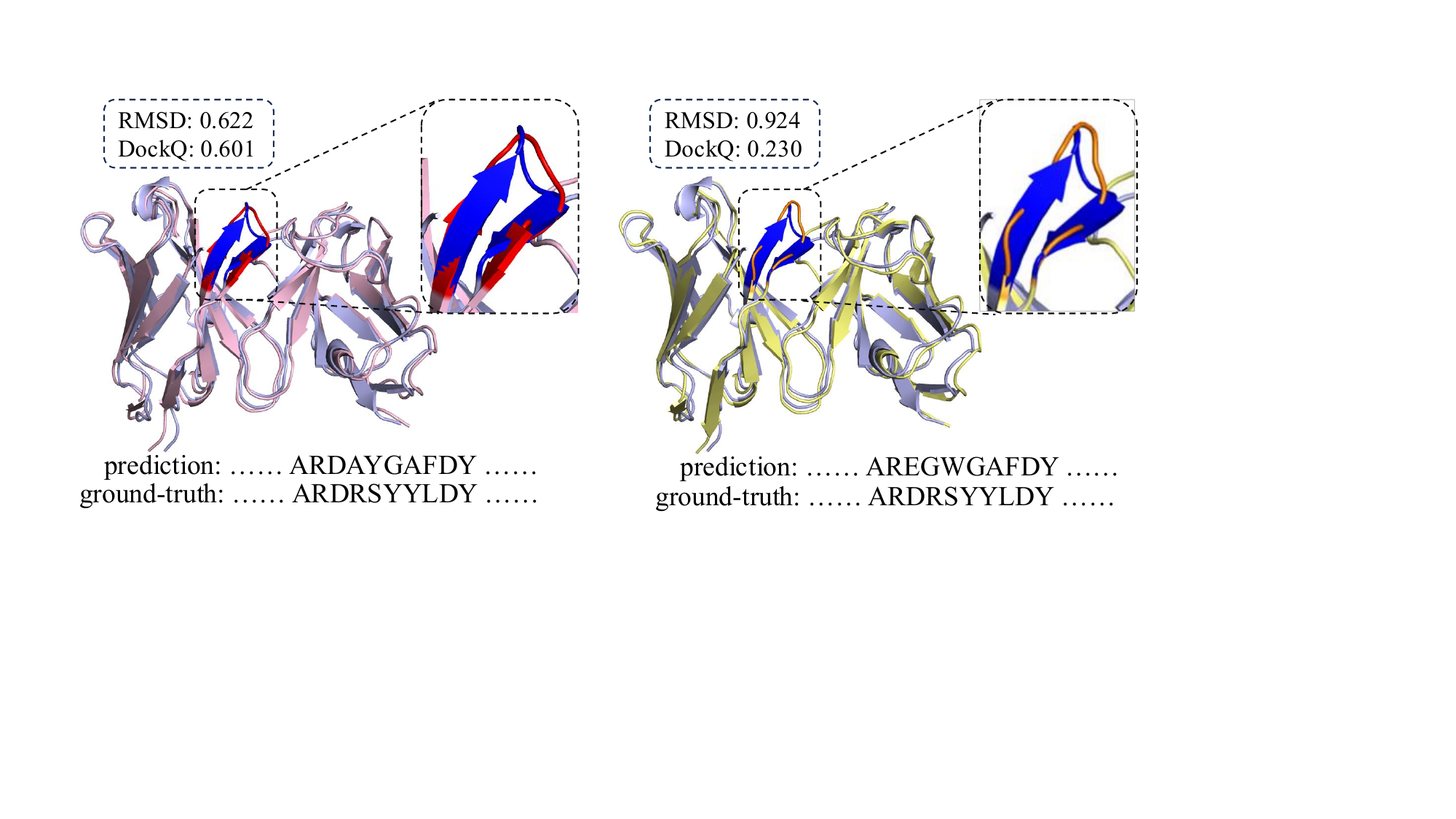} &
        \hspace{-3mm}
        \includegraphics[width=0.32\linewidth]{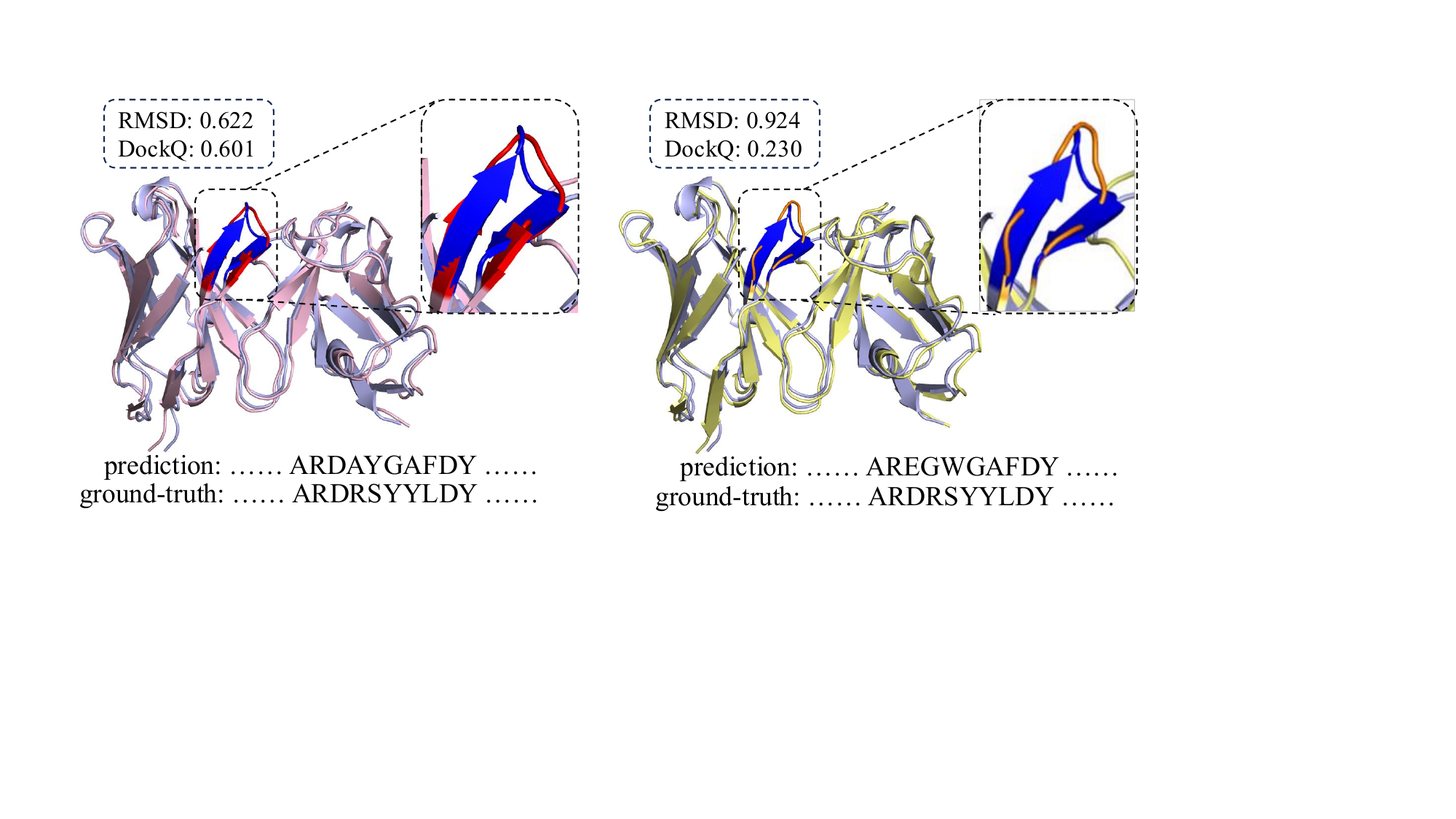} \\
        (a) Igformer &
        \hspace{-2mm}
        (b) dyMEAN \\
    \end{tabular}
    \vspace{-2mm}
    \caption{Antibody structures by Igformer \& dyMEAN.}
    \label{fig:fig4-task1-case}
    \vspace{-4mm}
  \end{small}
\end{figure}

\subsection{Task 1: CDR-H3 Design}
We first evaluate each model for both sequence and structure prediction on CDR-H3, which plays a critical role in antigen binding. During training, the model simultaneously predicts CDR-H3 residue sequences and generates coordinates for the entire antibody structure. 
We compare Igformer against five SOTA methods: RosettaAb \cite{adolf2018rabd}, DiffAb \cite{luo2022diffab}, HERN \cite{jin2022hern}, MEAN \cite{kong2023mean}, and dyMEAN \cite{kong2023dymean}. Evaluation metrics include unaligned RMSD for CDR-H3 using CA atoms and sequence-based metrics for CDR-H3. Details of baselines and metrics can be found in Appendix~\ref{appendix-sec:baselines} and Appendix~\ref{appendix-sec:metrics}, respectively.

Table \ref{tab:task1-main} reports the results of the epitope-binding CDR-H3 design task on the RAbD dataset. As we can observe, Igformer demonstrates superior performance across all evaluation metrics. Specifically, it achieves an AAR of 43.50\%, representing a 2.2\% relative improvement over previous SOTA method dyMEAN, and substantially outperforms earlier approaches like RosettaAb and HERN. Most notably, Igformer excels in docking performance, achieving an RMSD of 7.15 and a DockQ score of 0.450, marking a relative improvement of 15.08\% and 10.29\% respectively over the second-best method. Figure~\ref{fig:fig4-task1-case} provides a comparative visualization of predictions generated by Igformer and dyMEAN. These results demonstrate the enhanced capability of Igformer to capture antibody-antigen interactions and generate more accurate binding interface predictions.

\subsection{Task 2: Multiple CDR Design}
In this set of experiments, we compare Igformer against dyMEAN, the current leading method, on sequence and structure prediction across all six CDRs using the RAbD dataset. The evaluation examines structural accuracy through AAR for individual CDR regions and assesses overall antibody structure quality using additional metrics.
As shown in Tables \ref{tab:task2-main1}-\ref{tab:task2-main2-task3}, Igformer achieves better sequence recovery results in 5 out of 6 CDRs except for CDR-L1, where Igformer maintains comparable performance with dyMEAN. Moreover, Igformer consistently surpasses dyMEAN in overall performance by substantial margins. Notably, Igformer achieves relative improvements of 5.83\% in overall AAR and 21.24\% in DockQ score. These comprehensive results validate the effectiveness of Igformer in both sequence prediction and structural generation across multiple CDR regions.

\subsection{Task 3: Full Antibody Design}
\begin{table}[t]
\small
\vspace{-2mm}
\centering
\caption{Comparison on multiple CDR and full antibody design.}
\label{tab:task2-main2-task3}
\begin{tabular}{l|cccc}
\hline \hline
\multicolumn{5}{c}{Multiple CDR Design} \\
\hline
Model & AAR$\uparrow$ & TMscore$\uparrow$ & lDDT$\uparrow$ & DockQ$\uparrow$ \\
\hline
dyMEAN & 60.05\% & 0.9654 & 0.8029 & 0.3973 \\
Igformer & \textbf{63.55\%} & \textbf{0.9750} & \textbf{0.8311} & \textbf{0.4817} \\
\hline \hline
\multicolumn{5}{c}{Full Antibody Design} \\
\hline
Model & AAR$\uparrow$ & TMscore$\uparrow$ & lDDT$\uparrow$ & DockQ$\uparrow$ \\
\hline
dyMEAN & 71.37\% & 0.9662 & 0.7471 & 0.4237 \\
Igformer & \textbf{73.69\%} & \textbf{0.9681} & \textbf{0.7580} & \textbf{0.4600} \\
\hline \hline
\end{tabular}
\vspace{-4mm}
\end{table}

In this set of experiments, we evaluate predictions across all regions, including both frameworks and CDRs on RAbD dataset, where the sequences and coordinates for the entire antibody are masked in the testing phase. Table \ref{tab:task2-main2-task3} presents the performance comparison for full antibody design. 
These results align with the fundamental characteristic of antibody structure: framework regions exhibit higher conservation across different antibodies, making them more predictable than hyper-variable CDR regions. Our experimental findings confirm this biological characteristic, as both models achieve higher sequence recovery rates in full antibody design compared to CDR-specific prediction tasks.
Notably, Igformer demonstrates consistent superiority over dyMEAN across all evaluation metrics. Igformer achieves a 3.25\% relative improvement in AAR and an 8.57\% relative enhancement in DockQ score. These significant improvements indicate that Igformer not only generates more accurate sequence predictions but also maintains higher structural fidelity, leading to improved antigen binding affinity prediction.

\subsection{Task 4: Complex Structure Prediction}
In this set of experiments, we evaluate the model performance on the Igfold benchmark \cite{ruffolo2023IgFold}, focusing on complex structure prediction when given complete antibody sequences, including CDR-H3. Evaluation metrics include CDR-H3-specific RMSD and comprehensive structural assessments of the entire antibody complex. 

As demonstrated in Table \ref{tab:task4-main}, Igformer establishes new state-of-the-art performance in the complex structure prediction task, particularly excelling in docking metrics. Specifically, Igformer achieves relative improvements of 12.44\% in RMSD and 15.49\% in DockQ scores, respectively. Notably, even when compared to GT$\Rightarrow$HERN, which utilizes ground-truth structures for docking, Igformer demonstrates superior docking performance. These results validate the effectiveness of the proposed inter-graph refinement strategy in capturing the intricate residue interactions within antibody-antigen binding interfaces.

\begin{table}[t]
\small
\centering
\caption{Comparison on complex structure prediction. Results with $^{\dagger}$ are predicted using the ground-truth (GT) structure and serve as the upper bound of this task.}

\label{tab:task4-main}
\begin{tabular}{lcc|cc}
\hline \hline
\multirow{2}{*}{Model} & \multicolumn{2}{c}{Structure} & \multicolumn{2}{c}{Docking} \\
\cline{2-5}
 & TMscore$\uparrow$ & lDDT$\uparrow$ & RMSD$\downarrow$ & DockQ$\uparrow$ \\
\hline
IgFold$\Rightarrow$HDock* & 0.9701 & 0.8439 & 16.32 & 0.202 \\
IgFold$\Rightarrow$HERN* & 0.9702 & 0.8441 & 9.63 & 0.429 \\
GT$\Rightarrow$HERN & 1.0000$^{\dagger}$ & 1.0000$^{\dagger}$ & 9.65 & 0.432 \\
dyMEAN & \textbf{0.973} & 0.8676 & 9.00 & 0.452 \\
Igformer & \textbf{0.973} & \textbf{0.8677} & \textbf{7.88} & \textbf{0.522} \\
\hline \hline
\end{tabular}
\end{table}
\begin{table}[]
\caption{Ablation study on CDR-H3 design.}
\small
\centering
\label{tab:ablation-main}
\begin{tabular}{lccc|ccc}
\hline \hline
\multirow{2}{*}{Model} & \multicolumn{3}{c}{Generation} & \multicolumn{3}{c}{Docking} \\
\cline{2-7}
 & AAR$\uparrow$ & TMscore$\uparrow$ & lDDT$\uparrow$ & CAAR$\uparrow$ & RMSD$\downarrow$ & DockQ$\uparrow$ \\
\hline
Igformer & \textbf{43.50\%} & \textbf{0.9757}  & \textbf{0.8650} & \textbf{28.11\%}  & \textbf{7.15} & \textbf{0.450} \\
- APP & 42.80\% & 0.9725 & 0.8412 & 27.60\% & 8.30 & 0.416 \\
- SGFormer & 43.45\% & 0.9743 & 0.8578 & 28.00\% & 7.82 & 0.431 \\
- TM & 43.32\% & 0.9731 & 0.8499 & 27.91\% & 8.40 & 0.417 \\
- AA & 43.06\% & 0.9735 & 0.8460 & 27.66\% & 8.12 & 0.425 \\
- Dual EMP & 42.75\% & 0.9728 & 0.8430 & 27.50\% & 8.22 & 0.419 \\
\hline \hline
\end{tabular}
\end{table}

\subsection{Ablation Study}
We conduct comprehensive ablation studies to evaluate the individual contributions of key components in Igformer. Specifically, we systematically analyze APP and SGFormer introduced in Section~\ref{sec:subsec-inter-graph}, triangle multiplicative module (TM), axial attention (AA), and dual EMP module presented in Section~\ref{sec:subsec-update}. Table~\ref{tab:ablation-main} illustrates the impact of removing each component from Igformer on epitope-binding CDR-H3 prediction tasks. Results of experiments on Task 4 are provided in Appendix~\ref{appendix-sec:subsec-ablation}.

Our observations are as follows. Removing the APP reduces performance across all metrics, particularly affecting the DockQ score (0.416) and RMSD (8.30). The removal of SGFormer shows similar trends but with less impact, achieving a DockQ score of 0.431 and RMSD of 7.82. These results demonstrate their crucial role in capturing complicated interface interactions. 
The triangle multiplicative module and axial attention mechanism both contribute significantly to model performance. Without TM, the AAR drops to 43.32\% and DockQ to 0.417, while removing AA results in an AAR of 43.06\% and DockQ of 0.425. Most notably, replacing our dual EMP architecture with a single message passing framework results in significant performance degradation across all metrics, with AAR dropping to 42.79\% and DockQ to 0.417, validating our design choice of separate intra- and inter-graph message passing processes.

\section{Conclusion}
\label{sec:conclusion}
In this paper, we propose Igformer, an end-to-end antibody co-design framework that simultaneously generates both 1D sequences and 3D structures for epitope-binding antibody CDRs. Igformer advances existing models through a novel intra-graph refinement process, which is capable of capturing intricate residue interactions within the epitope-paratope binding region. Through extensive experimental evaluation, we demonstrate the effectiveness of Igformer, providing a new perspective for end-to-end antibody co-design tasks.

\bibliography{igformer}
\bibliographystyle{iclr2025_conference}

\newpage
\appendix
\onecolumn
\addcontentsline{toc}{section}{Appendix} % Add the appendix text to the document TOC
\part{} 
\parttoc 

\newpage
\section{Related Work}
\label{appendix-sec:related-work}
{\bf Antibody Co-design.}
Antibody co-design methodologies facilitate the simultaneous generation of sequence and structure in an end-to-end manner, thereby enabling the modeling of intricate dependencies between backbone conformation and amino acid composition. Specifically, RefineGNN \cite{jin2022refinegnn} utilized an iterative refinement approach, alternately predicting atom coordinates and residue types in CDRs through auto-regression. DiffAb \cite{luo2022diffab} advanced the field by introducing a diffusion-based framework that explicitly targets specific antigens while considering atomic-level structures, enabling both sequence and structure generation for CDRs based on framework regions and target antigens. MEAN \cite{kong2023mean} and its successor dyMEAN \cite{kong2023dymean} leveraged graph neural networks with E(3)-equivariant architectures for CDR prediction, with dyMEAN extending to full-atom geometry modeling including both backbone and side chains. AbDiffuser \cite{martinkus2023abdiffuser} introduced an efficient SE(3)-equivariant architecture called APMixer for complete antibody structure generation, demonstrating success in wet-lab validation. Most recently, IgGM \cite{wang2024iggm} expanded capabilities to end-to-end antibody design through a hybrid diffusion model that can generate complete antibody sequences and structures simultaneously without requiring template structures.Despite these advances, existing methods have largely simplified or overlooked the complex interactions at antibody-antigen binding interfaces. Our work addresses this limitation through a novel inter-graph refinement strategy that integrates personalized propagation into global attention mechanisms, enabling more accurate modeling of binding interface dynamics.

{\bf Sequence Design.}
The field of sequence design has evolved from LSTM-based approaches to more advanced language models. LSTM-based antibody design \cite{saka2021antibody} pioneered sequence generation for affinity maturation by employing a long short-term memory network trained on enriched phage display sequences to generate and prioritize antibody variants, 
using likelihood scores from the trained model to identify promising candidates and outperforming traditional frequency-based screening approaches. ProGen \cite{madani2020progen} introduced a controllable protein generation language model trained on 280M sequences, enabling both high-quality sequence generation and fine-grained control through conditioning. The model employed a transformer architecture and demonstrated the ability to generate proteins with native-like structural and functional properties. 
More recently, EvoDiff \cite{alamdari2023protein} introduced an evolutionary-scale diffusion framework operating directly in sequence space. The model combines discrete diffusion with evolutionary-scale sequence data to enable both unconditional generation of diverse, structurally-plausible proteins and flexible conditional generation via sequence inpainting.

{\bf Structure Design. } 
Structure design has progressed from single-step prediction to sophisticated diffusion-based approaches. RFjoint \cite{wang2021RFjoint} initially pioneered functional site scaffolding with structure prediction networks, though limited by its single-step prediction approach. RFdiffusion \cite{watson2022RFdiffusion} advanced this approach by fine-tuning the RoseTTAFold structure prediction network as a denoising network within a diffusion model framework, enabling iteratively protein structure building from random noise.
IgFold \cite{ruffolo2023IgFold} introduced a specialized antibody structure prediction framework combining a pre-trained language model (AntiBERTy) with graph neural networks. This innovation enabled direct prediction of backbone atom coordinates, achieving accuracy comparable to AlphaFold 2 \cite{jumper2021alphafold} while being significantly faster and eliminating the need for multiple sequence alignments or template structures.
Most recently, AlphaFold 3 \cite{abramson2024AlphaFold3} has taken a broader approach by introducing a unified deep-learning framework capable of predicting structures for diverse biomolecular complexes. It achieves this by replacing evoformer in Alphafold2 with a simpler Pairformer and employing a diffusion-based structure module for direct atom coordinates prediction, representing a significant step toward a more generalized solution for biomolecular structure prediction.

\section{Problem Formulation}
\label{sec:preliminary}
Antibodies are Y-shaped proteins with a distinctive molecular architecture: they comprise heavy and light chains, each containing constant domains that are conserved across antibodies and variable domains that determine binding specificity. Within these variable domains, the structure alternates between framework regions (FRs) that maintain structural stability and {\em complementarity-determining regions (CDRs)} that are responsible for antigen recognition, with CDR positions defined by the IMGT numbering scheme \cite{lefranc2003imgt}. Among these regions, the CDR-H3 loop in the heavy chain variable domain exhibits the highest variability and plays a pivotal role in antigen recognition \cite{narciso2011analysis, tsuchiya2016diversity, frederic2023abmpnn}. In the context of antibody-antigen interactions, the binding surface of the antibody is termed the {\em paratope}, while its corresponding target region on the antigen is called the {\em epitope}, as illustrated in Figure~\ref{fig:fig1-complex}.

\begin{figure}[t]
  \centering
  \hspace{-2mm}
  \includegraphics[width=0.8\textwidth]{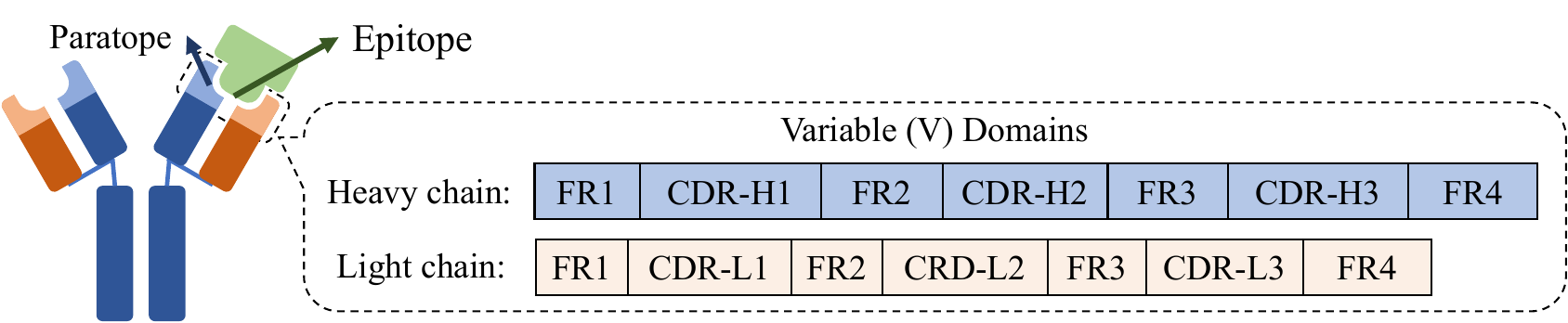}
  \vspace{-3mm}
  \caption{Illustration of antibody-antigen complex.}
  \vspace{-3mm}
  \label{fig:fig1-complex}
  \vspace{-2mm}
\end{figure}

The objective of antibody co-design is to simultaneously predict the amino acid sequence and 3D structure of antibody CDRs that optimally bind a given epitope on a target antigen, as illustrated in Figure~\ref{fig:fig2-task}. 
We formulate this problem using a dual-graph representation of the binding region. The first component is an antibody graph $G_{ab} = \{V_{ab}, E_{ab}\}$, while the second is an antigen epitope graph $G_{ae} = \{V_{ae}, E_{ae}\}$. The vertices $V_{ab}$ and $V_{ae}$ represent amino acid residues, where each residue $v_i$ is characterized by its amino acid type $s_i$ and a coordinate matrix $\vect{X}_i \in \mathbb{R}^{3\times c_i}$ encoding the 3D positions of its $c_i$ atoms, including both backbone and side chain atoms. Spatial proximities between residues are captured through edges $E_{ab}$ and $E_{ae}$ constructed using a {\em $k$-nearest neighbors (kNN)} approach over all atomic positions. The complete graph construction process is detailed in Appendix~\ref{appendix-sec:graph-construct}. Within this framework, we denote the paratope residues as a subset $V_p \subseteq V_{ab}$ of the antibody graph, where we specifically focus on {\em CDR-H3} as the paratope region. 
\begin{definition}
\label{def:antibody-codesign}
    Given partial antibody sequence $\{ s_i \mid v_i\in V_{ab}, v_i\notin V_p \}$ and the target antigen epitope sequence $\{s_j \mid v_j\in V_{ae}\}$ with corresponding structural information $\vect{X}_{ab}$ and $\vect{X}_{ae}$, the goal of antibody co-design is to predict both the 1D sequence $\{s_k \mid v_k \in V_p\}$ of the paratope and 3D structure $\{\vect{X}_k \mid v_k\in V_p\}$ of the entire antibody in the context of the antibody-antigen complex.
\end{definition}
This formulation defines a joint optimization problem where the model simultaneously predicts both the amino acid sequence of the antibody paratope and its complete 3D coordinates within the antibody-antigen binding region. Through this formulation, we aim to harness the expressive power of equivariant graph neural architectures to capture the intricate spatial and chemical relationships that govern antibody-antigen interactions, enabling end-to-end structure-based antibody co-design.

\section{Method}
\label{sec:framework}

In this section, we present Igformer, an end-to-end framework for antibody co-design that introduces a novel approach to modeling antibody-antigen interactions. At its core, Igformer advances previous attempts through a sophisticated inter-graph refinement mechanism that combines personalized propagation with global attention, enabling more accurate modeling of complex interactions within antibody-antigen binding interfaces.
As illustrated in Figure~\ref{fig:fig3-framework}, Igformer consists of three key components: (i) Equivariant Message Passing module, (ii) Inter-graph Refinement module, and (iii) Iterative Update module. In the following, we will detail the initialization process in Section~\ref{sec:subsec-init}, elaborate on the Equivariant Message Passing module in Section~\ref{sec:subsec-emp}, followed by the Inter-Graph Refinement module in Section~\ref{sec:subsec-inter-graph}, and the Iterative Update module in Section~\ref{sec:subsec-update}. We then present the Igformer pipeline in Section~\ref{sec:subsec-pipeline} and learning objectives in Section~\ref{sec:subsec-loss}.

\subsection{Initialization}
\label{sec:subsec-init}
The foundation of Igformer lies in its comprehensive representation scheme that integrates both biochemical and structural information through embedding and coordinate initialization steps.

{\bf Feature initialization.}
For each residue $v_i$, we construct a dual-component embedding: a residue embedding $\vect{H}_i^{res} \in \mathbb{R}^{d}$ that encodes amino acid properties, and a position embedding $\vect{H}_i^{pos} \in \mathbb{R}^{d}$ that captures sequential context through distinct indices for antigen, heavy chain, and light chain regions. These components are combined additively to form the final residue representation:
$
    \vect{H}_i = \vect{H}_i^{res} + \vect{H}_i^{pos},
$
This initialization approach enables the simultaneous processing of amino acid properties and their structural context while differentiating between functionally distinct antibody-antigen complex regions. More details of embedding initialization are provided in Appendix~\ref{appendix-sec:embeds}.

{\bf Coordinate initialization.}
Igformer employs a structured coordinate initialization strategy that combines template-based modeling with precise alignment and normalization procedures. For the epitope region $V_{ae}$, we maintain the actual experimental coordinates, while other regions are initialized using templates (refer to Appendix~\ref{appendix-sec:subsec-template-coord}) that provide positions of backbone atoms (N, CA, C, O). Missing template coordinates are estimated through linear interpolation between the nearest known template positions along the sequence.
We then optimize these initial structures through rigid-body alignment using the Kabsch algorithm \cite{kabsch1976rotation}, which minimizes structural discrepancies between the template and known coordinates.
Following initialization, we apply a chain-specific processing pipeline to establish a consistent reference frame. Coordinates of each chain are centered relative to their respective virtual center of mass, and subsequently normalized to ensure uniform scale across the complex.
This carefully calibrated initialization process provides a robust foundation for subsequent equivariant message passing and inter-graph refinement stages. For a comprehensive description of the coordinate initialization and processing procedures, we refer readers to Appendix~\ref{appendix-sec:coord}.

\subsection{Equivariant Message Passing}
\label{sec:subsec-emp}

Given input coordinates $\vect{X}$ and embeddings $\vect{H}$ of residues in a graph $G$, at the $l$-th layer, the {\em equivariant message passing (EMP)} module updates residue representations and coordinates 
$\vect{H}_i^{(l+1)}, \vect{X}_i^{(l+1)} = \textit{EMP}\left(\vect{H}_i^{(l)}, \vect{X}_i^{(l)}\right)$ through four sequential steps.

First, we compute multi-level pairwise residue similarities by combining residue and atomic information:
\begin{equation*}
\begin{aligned}
    \Delta \vect{X}_{ij}^{(l)} &= \vect{X}_{i}^{(l)} - \vect{X}_{j}^{(l)}, \ sim_{ij}^{res} = \Delta \vect{X}_{ij}^{(l)}(\Delta \vect{X}_{ij}^{(l)})^{\top}, \\
    sim_{ij(m,n)}^{atom} &= \sum_{c=1}^3 (\vect{X}_{i,m,c} - \vect{X}_{j,n,c})(\vect{X}_{i,m,c} - \vect{X}_{j,n,c})^{\top},
\end{aligned}
\end{equation*}
where $m, n \in \{1,\cdots,14\}$\footnote{Each residue is represented using 14 atoms in 3D space.} denote different atoms in the residue. The multi-level positional information is processed through MLPs and combined into a similarity matrix:
\begin{equation*}
    \vect{S}_{i,j}^{(l)} = w \cdot \textit{MLP}_1(sim_{ij}^{res}) + (1-w) \cdot \textit{MLP}_2(sim_{ij}^{atom}).
\end{equation*}
Next, we update edge features by incorporating node information and similarity scores:
\begin{equation*}
    \vect{H}_{e_{ij}}^{(l+1)} = \textit{EdgeMLP} \left(  \vect{H}_i^{(l)} \oplus \vect{H}_j^{(l)} \oplus \vect{S}_{i,j}^{(l)} \oplus \vect{H}_{e_{ij}}^{(l)}  \right),
\end{equation*}
where $\oplus$ denotes concatenation operation. The coordinates are subsequently updated through weighted aggregation of neighbor differences:
\begin{equation*}
    \vect{X}_i^{(l+1)} = \vect{X}_i^{(l)} + \sum_{j \in \mathcal{N}(i)} \Delta \vect{X}_{ij}^{(l)} \cdot \textit{CoordMLP}(\vect{H}_{e_{ij}}^{(l+1)}),
\end{equation*}
where $\mathcal{N}_i$ denotes the neighboring nodes of $v_i$ in graph $G$.
Finally, node features are updated by aggregating neighboring edge information with residual connections:
\begin{equation*}
    \vect{H}_i^{(l+1)} = \vect{H}_i^{(l)} + \textit{NodeMLP} \left(\vect{H}_i^{(l)} \oplus \sum_{j \in \mathcal{N}(i)} \vect{H}_{e_{ij}}^{(l+1)} \right)
\end{equation*}
This E(3)-equivariant (refer to Definition~\ref{def:e3-equiv-inv} in Appendix~\ref{appendix-sec:subsec-e3-def}) message passing scheme ensures that both spatial and biochemical properties are properly captured and updated during the message passing process.
\begin{theorem}
\label{thm:thm1-emp}
    For any transformation $T \in E(3)$, we have $\vect{H}_i^{(l+1)}, T(\vect{X}_i^{(l+1)}) = \textit{EMP}\left(\vect{H}_i^{(l)}, T(\vect{X}_i^{(l)})\right)$, where $T(X) := \vect{Q}\vect{X} + \vect{b}$ denotes the E(3) transformation of $\vect{X}$.
\end{theorem}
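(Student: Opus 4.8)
The plan is to verify E(3)-equivariance by tracking how each of the four sequential steps of EMP transforms under $T(\vect{X}) = \vect{Q}\vect{X} + \vect{b}$ with $\vect{Q}$ orthogonal, showing that coordinate quantities transform covariantly (picking up $\vect{Q}$) while all scalar quantities fed into the MLPs are invariant. The key observation is that the module only ever uses coordinates through pairwise differences $\Delta\vect{X}_{ij}^{(l)} = \vect{X}_i^{(l)} - \vect{X}_j^{(l)}$, so the translation $\vect{b}$ cancels immediately: $T(\vect{X}_i^{(l)}) - T(\vect{X}_j^{(l)}) = \vect{Q}\,\Delta\vect{X}_{ij}^{(l)}$. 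I would state this as the first lemma-like observation and then push it through the computation.

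\textbf{Step 1 (similarities are invariant).} Since $sim_{ij}^{res} = \Delta\vect{X}_{ij}^{(l)}(\Delta\vect{X}_{ij}^{(l)})^{\top}$ becomes $\vect{Q}\,\Delta\vect{X}_{ij}^{(l)}(\Delta\vect{X}_{ij}^{(l)})^{\top}\vect{Q}^{\top}$, and — more to the point — the Gram-type inner products are preserved by orthogonality ($\vect{Q}^{\top}\vect{Q} = \vect{I}$), the scalar entries of $sim_{ij}^{res}$ are unchanged; the atom-level quantity $sim_{ij(m,n)}^{atom}$ is a sum of squared coordinate differences, i.e. a squared Euclidean distance between atoms, which is manifestly preserved by any rigid motion. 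Hence $\vect{S}_{i,j}^{(l)}$, being a fixed function (through $\textit{MLP}_1,\textit{MLP}_2$) of these invariants, is unchanged. \textbf{Step 2 (edge features invariant).} The edge update $\vect{H}_{e_{ij}}^{(l+1)}$ depends only on $\vect{H}_i^{(l)}, \vect{H}_j^{(l)}, \vect{S}_{i,j}^{(l)}, \vect{H}_{e_{ij}}^{(l)}$, all invariant, so it is invariant too. \textbf{Step 3 (coordinates covariant).} The coordinate update reads $T(\vect{X}_i^{(l)}) + \sum_{j}\vect{Q}\,\Delta\vect{X}_{ij}^{(l)}\cdot\textit{CoordMLP}(\vect{H}_{e_{ij}}^{(l+1)}) = \vect{Q}\bigl(\vect{X}_i^{(l)} + \sum_j \Delta\vect{X}_{ij}^{(l)}\cdot\textit{CoordMLP}(\vect{H}_{e_{ij}}^{(l+1)})\bigr) + \vect{b} = T(\vect{X}_i^{(l+1)})$, using that $\textit{CoordMLP}$ receives the invariant edge features and that $\vect{Q}$ factors out of the sum and the $\vect{b}$ re-attaches. \textbf{Step 4 (node features invariant).} The final node update consumes only $\vect{H}_i^{(l)}$ and the invariant $\vect{H}_{e_{ij}}^{(l+1)}$, hence $\vect{H}_i^{(l+1)}$ is unchanged. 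Combining Steps 3 and 4 yields exactly $\vect{H}_i^{(l+1)}, T(\vect{X}_i^{(l+1)}) = \textit{EMP}(\vect{H}_i^{(l)}, T(\vect{X}_i^{(l)}))$.

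The argument is essentially bookkeeping and I do not expect a genuine obstacle; the only point requiring mild care is making explicit that \emph{all} coordinate dependence enters through differences and through orthogonally-invariant scalar combinations (Gram matrices / squared distances), so that no hidden term breaks translation- or rotation-invariance — in particular one should note that $sim_{ij}^{res}$ is used only via the MLP acting on its (invariant) scalar entries, not via the raw matrix, and that the atom indexing $m,n\in\{1,\dots,14\}$ plays no role in the equivariance since the same atoms are compared on both sides. I would also remark that reflections are handled automatically because nothing in the construction uses orientation (no cross products or determinants), which is why the full group $E(3)$ — not merely $SE(3)$ — is obtained; if one wanted, this could be isolated as a one-line corollary. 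No induction over layers is needed for the stated single-layer claim, though I would note in passing that iterating the identity gives equivariance of the full stack.
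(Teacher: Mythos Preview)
Your proposal is correct and follows essentially the same approach as the paper: the paper decomposes the argument into a lemma on coordinate differences (your ``key observation''), a lemma on invariance of geometric scalars (your Step~1), a lemma on edge-feature invariance (your Step~2), and then separate theorems for coordinate equivariance (your Step~3) and embedding invariance (your Step~4), concluding Theorem~\ref{thm:thm1-emp} by combining them. One small notational point: when you write that $sim_{ij}^{res}$ becomes $\vect{Q}\,\Delta\vect{X}_{ij}^{(l)}(\Delta\vect{X}_{ij}^{(l)})^{\top}\vect{Q}^{\top}$ you have the rotation acting on the wrong index --- in the paper's convention the atom index is outermost (so $\Delta\vect{X}_{ij}\in\mathbb{R}^{14\times 3}$ and $sim_{ij}^{res}\in\mathbb{R}^{14\times 14}$), and the orthogonal factor cancels in the middle rather than appearing on the outside; your ``more to the point'' clause already captures the correct reason, so this is cosmetic.
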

All proofs of theorems are provided in Appendix~\ref{appendix-sec:proof}.

\subsection{Inter-Graph Refinement}
\label{sec:subsec-inter-graph}
The inter-graph $G_{inter} = (V_p,V_{ae},E_{inter})$ represents interactions in the antibody-antigen binding interface between antibody paratope residues $V_{p}$ and antigen epitope residues $V_{ae}$. 
The key innovation of Igformer lies in its sophisticated inter-graph refinement strategy, which combines personalized propagation with global attention to capture complex antibody-antigen interactions. 
Our approach integrates two complementary components: {\em approximate personalized propagation (APP)} for preserving local structural information and {\em simplified graph transformer (SGFormer)} for capturing global dependencies.
Next, we elaborate on the graph generation and refinement process.

{\bf Approximate Personalized Propagation.}
We construct intra-graphs $G_{ab}$ and $G_{ae}$ by establishing connectivity between $k$-nearest residues based on atomic-level coordinate comparisons, ensuring that graph connectivity reflects spatial relationships within both antibody and antigen components. A detailed description of the intra-graph construction process is provided in Appendix~\ref{appendix-sec:graph-construct}.

Given the residue embeddings $\vect{H}^{(0)}$ updated by EMP in the intra-graph, APP performs iterative message passing while maintaining a balance between local and global information:
\begin{equation}
\label{eq:eq-app}
    \vect{H}^{(j+1)} = (1 - \alpha) \cdot \hat{\vect{P}} \vect{H}^{(j)} + \alpha \cdot \vect{H}^{(0)},
\end{equation}
where $\hat{\vect{P}} = \hat{\mathbf{D}}^{-1/2} \hat{\mathbf{A}} \hat{\mathbf{D}}^{-1/2}$ denotes the normalized transition matrix in the intra-graph with self-loops, and $\alpha$ controls the retention of initial node features. This propagation scheme enables the updated embeddings to capture neighborhood context while preserving crucial local structural information \cite{klicpera2019appnp}.

{\bf Simplified Graph Transformer.}
Subsequently, we introduce a Simplified Graph Transformer (SGFormer) component that implements a global attention mechanism to dynamically model complex interactions between residues across the entire binding interface $V_{\textit{inter}} = V_{p} \cup V_{ae}$.
Specifically, for a pair of residues $(v_i,v_j)$ in the given antibody-antigen binding region, we compute the attention weight:
\begin{equation*}
    a_{ij} = \frac{\exp\left(\vect{H}_i^{\top} \vect{W} \vect{H}_j\right)}{\sum_{k \in V} \exp\left(\vect{H}_i^{\top} \vect{W} \vect{H}_k\right)},
\end{equation*}
where $\vect{W}$ is a learnable weight matrix. Residue embeddings are updated through attention-weighted aggregation:
\begin{equation*}
    \vect{H}_i = \sum_{j \in V} a_{ij} \cdot \vect{H}_j.
\end{equation*}
This self-attention mechanism allows SGFormer to globally capture interactions between all residues in the binding region, thereby capturing both local information and long-range dependencies. 

{\bf Inter-Graph Refinement.}
To model the binding interface dynamics, we encode bi-directional interactions between antibody paratope and antigen epitope residues into edge embeddings.
For each residue pair $(v_i,v_j)$, where $v_i$ belongs to the antibody paratope and $v_j$ to the antigen epitope. The updated pair-wise distance $dist_{ij}$ is:
\begin{equation*}
    dist_{ij} = \textit{EdgeMLP}\left(\vect{H}_i \oplus \vect{H}_j, \vect{H}_j \oplus \vect{H}_i \right).
\end{equation*}
These learned distances capture the complex chemical and spatial relationships between residues, enabling dynamic refinement of the inter-graph structure through $k$-nearest neighbor selection. The complete inter-graph refinement process is detailed in Appendix~\ref{appendix-sec:graph-construct}.
\begin{remark}
    The integration of APP and SGFormer enables comprehensive refinement of both residue and edge representations. APP preserving crucial local structural properties while facilitating efficient information propagation, and the attention mechanism in SGFormer captures dynamic residue interactions across the entire binding interface. This dual-approach enables precise modeling of both local chemical interactions and long-range structural dependencies, which is crucial for accurate antibody structure prediction, as we will show in our experiments.
\end{remark}

\subsection{Iterative Update} 
\label{sec:subsec-update}
Our framework implements a sophisticated iterative learning process that incorporates triangle attention mechanism \cite{jumper2021alphafold} with a dual-EMP module to capture complex interactions within antibody-antigen binding interfaces. The process begins with the initial embeddings $\vect{H}_{full} = \vect{H}_{ae} \cup \vect{H}_{ab}$ and extracts the interface-specific representations $\vect{H}_{inter} = \vect{H}_{ae} \cup \vect{H}_{p}$ from $\vect{H}_{full}$. The corresponding coordinates $\vect{X}_{full} = \vect{X}_{ae} \cup \vect{X}_{ab}$ and $\vect{X}_{inter} = \vect{X}_{ae} \cup \vect{X}_{p}$ are initialized according to Appendix~\ref{appendix-sec:coord}. 
For clarity, we denote $\vect{H}_{full}$ as $\vect{H}_{intra}$ and $\vect{X}_{full}$ as $\vect{X}_{intra}$ in subsequent discussions.

{\bf Triangle Multiplicative Module.}
Within the paratope region $V_p$ containing $n$ residues, we construct a normalized interaction matrix $\vect{Z} \in \mathbb{R}^{n \times n}$ to encode pairwise residue relationships for each residue pair $(v_i,v_j)$:
\begin{equation*}
    \vect{Z}_{ij} = \textit{LayerNorm} \left(\textit{MLP} \left([\vect{H}_i \oplus \vect{H}_j]\right) \right).
\end{equation*}
The concatenation operation $\oplus$ ensures directional sensitivity through its non-commutative nature. This interaction matrix undergoes iterative refinement through two complementary mechanisms: triangle multiplicative module and axial attention.

The triangle multiplicative module processes embeddings through learnable normalized projections to generate pair-wise residue interactions in the paratope:
\begin{equation*}
    \vect{H}_l = \vect{W}_l \vect{Z}, \quad \vect{H}_r = \vect{W}_r \vect{Z}, \quad \vect{I}_{ij} = (\vect{H}_l)_i^{\top} (\vect{H}_r)_j,
\end{equation*}
where $\vect{W}_l$ and $\vect{W}_r$ are learnable weight matrices.
The resulting pair-wise interactions are then modulated by a gating mechanism and projected into the embedding space:
\begin{equation*}
    f_{triangle}^{out}(\vect{H}) = \vect{W}_o \cdot \sigma(\vect{W}_g \vect{Z}) \cdot \vect{I},
\end{equation*}
where $ \sigma(\cdot) $ is the sigmoid function.

{\bf Axial Attention.}
The row-wise axial attention mechanism computes dynamic residue relationships through a scaled dot-product attention:
\begin{equation*}
    f_{att}^{out}(\vect{H}) = \text{softmax}\left( \frac{\vect{Q}^{\top} \vect{K}}{\sqrt{d_k}} \right) \cdot \vect{V}, 
\end{equation*}
where residue embeddings in $\vect{Q} = \vect{W}_q \vect{H}$, $\vect{K} = \vect{W}_k \vect{H}$, and $\vect{V} = \vect{W}_v \vect{H}$ are generated by $f_{triangle}^{out}(\vect{H})$, and $\vect{W}_q$, $\vect{W}_k$, and $\vect{W}_v$ are learnable parameters. The final paratope outgoing representation integrates both attention and triangle multiplicative mechanism:
\begin{equation*}
    f_{out}(\vect{H}) = f_{triangle}^{out}(\vect{H}) + f_{att}^{out}(\vect{H}).
\end{equation*}
The representations of the paratope region are updated iteratively integrating both outgoing and incoming representations:
\begin{equation}
\label{eq:eq-paratope}
    \vect{H}^{(k+1)} = \vect{H}^{(k)} + f_{out}(\vect{H}^{(k)}) + f_{in}(\vect{H}^{(k)}),
\end{equation}
where $f_{in}(\vect{H})$ represents the ingoing update analogous to the outgoing formulation but adopts ingoing (column-wise) multiplicative interactions and column-wise attention.
This bidirectional updating process ensures that paratope embedding $\vect{H}$ captures both pairwise geometric interactions via $ f_{triangle} $ and dynamic global dependencies via $f_{att}$.

After $Ks$ iterations, the diagonal elements of paratope embeddings calculated by Equation~\ref{eq:eq-paratope} are extracted to generate the final embeddings of paratope:  
\begin{equation}
\label{eq:eq-paratope-final}
    \vect{H}_{p} = \textit{MLP} \left(\textit{diag} \left( \vect{H}^{(K)} \right) \right),
\end{equation}

{\bf Dual EMP Module.} Next, we employ a dual-scale message-passing framework using two separate EMP modules to model intra-graph $G_{intra}$ and inter-graph $G_{inter}$ interactions:
\begin{equation}
\label{eq:eq-dual-emp}
\begin{aligned}
    \left( \vect{H}_{\textit{intra}}^{(t+1)}, \vect{X}_{\textit{intra}}^{(t+1)}\right) &= \textit{EMP}_{\textit{intra}} \left(\vect{H}_{\textit{intra}}^{(t)}, \vect{X}_{\textit{intra}}^{(t)} \right), \\
    \left( \vect{H}_{\textit{inter}}^{(t+1)}, \vect{X}_{\textit{inter}}^{(t+1)}\right) &= \textit{EMP}_{\textit{inter}} \left( \vect{H}_{\textit{inter}}^{(t)}, \vect{X}_{\textit{inter}}^{(t)} \right).
\end{aligned}
\end{equation}
At each iteration, we update our graph representations according to Equation~\ref{eq:eq-dual-emp}, and substitute the paratope embeddings in the intra-graph $\vect{H}_{\textit{intra}}^{(t+1)}$ using $\vect{H}_{\textit{inter}}^{(t+1)}$. On the other hand, coordinates for intra- and inter-graph are maintained separately. This dual-scale message-passing framework enables comprehensive modeling of both local residue relationships and global antigen-antibody interactions, with each scale optimized for its specific context.
Finally, residue connections are incorporated in the updated embeddings for stable learning across different training iterations.

\subsection{Igformer Pipeline}
\label{sec:subsec-pipeline}
We present the key stages of the Igformer learning pipeline, with detailed training algorithms provided in Appendix~\ref{appendix-sec:pipeline}. 
The process begins with the initialization of residue representations and coordinates. Following this, the intra-graph is constructed for personalized information propagation, which is then processed by the SGFormer for inter-graph refinement. The embeddings generated in this process serve exclusively for inter-graph refinement and are not involved in the representation learning process. Concurrently, a triangle attention module updates the representation of each residue in the paratope. Finally, a dual EMP module processes residue representations to generate the final coordinates and representations, which are then used for downstream structure and sequence prediction tasks.

The following theorem indicates that the coordinates generated by Igformer are E(3)-equivariant and the residue embeddings are invariant.
\begin{theorem}
\label{thm:thm-igformer}
    Let $\hat{\vect{H}}_i, \hat{\vect{X}}_i = \textit{Igformer} \left( \vect{H}_i^{(0)}, \vect{X_i} \right)$ denote the embedding and coordinate of $v_i$ generated by Igformer.
    For any transformation $T \in E(3)$, we have $\hat{\vect{H}}_i, T(\hat{\vect{X}}_i) = \textit{Igformer}\left(\vect{H}_i^{(0)}, T(\vect{X_i})\right)$, where $T(X) := \vect{Q}\vect{X} + \vect{b}$ denotes the E(3) transformation of $\vect{X}$.
\end{theorem}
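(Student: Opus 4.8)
The plan is to establish Theorem~\ref{thm:thm-igformer} by a modular argument that tracks the E(3)-transformation behaviour of coordinates and the invariance of embeddings through every stage of the pipeline described in Section~\ref{sec:subsec-pipeline}. First I would fix a transformation $T \in E(3)$, written $T(\vect{X}) = \vect{Q}\vect{X} + \vect{b}$ with $\vect{Q}$ orthogonal, and observe that all inputs fed to the network are either rotation/translation-covariant coordinates or purely biochemical features $\vect{H}_i^{(0)}$ that do not depend on $\vect{X}$ at all; hence the base case is immediate. I would then argue inductively over the composition of modules: the statement to propagate is ``coordinates transform as $\vect{X} \mapsto \vect{Q}\vect{X} + \vect{b}$ and embeddings are unchanged.''

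The key steps, in order, would be: (i) invoke Theorem~\ref{thm:thm1-emp} to handle each EMP invocation, including both $\textit{EMP}_{\textit{intra}}$ and $\textit{EMP}_{\textit{inter}}$ in Equation~\ref{eq:eq-dual-emp}, since that theorem already gives exactly the equivariance/invariance we need; (ii) check the initialization stage of Section~\ref{sec:subsec-init} — here I would note that the Kabsch alignment and centering/normalization procedures produce coordinates that are equivariant because subtracting a (linearly transformed) virtual center of mass commutes with $T$ up to the translation $\vect{b}$ being cancelled, and the global scale factor is $T$-invariant since orthogonal $\vect{Q}$ preserves distances; (iii) verify that the inter-graph refinement module (APP, SGFormer, the $\textit{EdgeMLP}$ producing $dist_{ij}$) operates only on the invariant embeddings $\vect{H}$ and therefore returns invariant outputs, and that the kNN graph topology $E_{inter}$ is determined by $T$-invariant quantities (learned distances from invariant embeddings, or pairwise atomic distances which $\vect{Q}$ preserves), so graph connectivity is unchanged under $T$; (iv) likewise observe the Triangle Multiplicative and Axial Attention modules (Equations~\ref{eq:eq-paratope}--\ref{eq:eq-paratope-final}) are functions of embeddings alone, hence invariant; (v) compose: since the pipeline is an alternating composition of coordinate-equivariant maps (EMP, initialization) and embedding-only maps that preserve invariance and do not touch coordinates, the overall output satisfies $\hat{\vect{H}}_i, T(\hat{\vect{X}}_i) = \textit{Igformer}(\vect{H}_i^{(0)}, T(\vect{X}_i))$.

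The main obstacle I anticipate is the coordinate initialization step: unlike the clean algebraic modules, it involves the Kabsch algorithm and chain-wise normalization, and one must argue carefully that these are equivariant rather than merely invariant — in particular that the optimal rigid-body superposition returned by Kabsch, when the mobile coordinates are pre-transformed by $T$, yields coordinates that differ from the untransformed case by exactly $T$ (this uses that Kabsch finds the minimizer over all of $SE(3)$, so composing with $T^{-1}$ on one side is absorbed). A secondary subtlety is the graph-construction steps scattered through Appendix~\ref{appendix-sec:graph-construct}: I would need to confirm that the kNN edge sets are built from $T$-invariant inputs so that the same combinatorial graph is produced before and after applying $T$; if the connectivity changed, equivariance would fail. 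Once these two points are nailed down, the remainder is a routine structural induction leaning entirely on Theorem~\ref{thm:thm1-emp} and the observation that MLPs/attention on invariant features stay invariant.
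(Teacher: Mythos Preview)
Your proposal is correct and follows essentially the same approach as the paper: a modular induction through the pipeline, invoking Theorem~\ref{thm:thm1-emp} for the EMP layers and checking that the embedding-only modules (APP, SGFormer, Triangle Multiplicative, Axial Attention) preserve invariance, then chaining these together. In fact you are more careful than the paper's own proof, which does not explicitly address the equivariance of the Kabsch-based initialization or the $T$-invariance of the kNN graph topology---both are genuine subtleties worth verifying, and your sketch of how to handle them is sound.
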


\subsection{Prediction and Loss Function}
\label{sec:subsec-loss}

{\bf Sequence Prediction.}
The $i$-th position in the amino acid sequence of the paratope is predicted using its embedding:
$$
    s_i = \textit{softmax}\left( \textit{MLP}\left(\vect{H}_i\right) \right)
$$
{\bf Structure Prediction.}
The 3D structures of paratope are generated by the coordinates $\vect{X}$ computed in the last round of Dual EMP Module via Equation~\ref{eq:eq-dual-emp}.

{\bf Loss Function.}
The loss function of Igformer consists of three components:
\begin{equation}
\label{eq:eq-loss}
  \mathcal{L} =  \mathcal{L}_{\textit{seq}} + \mathcal{L}_{\textit{struct}} +  \mathcal{L}_{\textit{interface}}
\end{equation}
Here, $\mathcal{L}_{\textit{seq}}$ is the cross entropy loss that minimizes the dissimilarities between predicted and original 1D sequences. $\mathcal{L}_{\textit{struct}}$ minimizes the difference between reconstructed and ground-truth 3D structures. $\mathcal{L}_{\textit{interface}}$ optimizes the reconstructed inter-graph. Detailed descriptions of each component are provided in Appendix~\ref{appendix-sec:loss}.

\section{Supplementary}
\label{appendix-sec:sup}
\subsection{Graph Construction}
\label{appendix-sec:graph-construct}
{\bf Intra-Graph Construction. } The intra-graph construction process integrates both sequence and structural information from the antibody-antigen complex. We begin by extracting information from the antigen epitope $G_{ae}$ and antibody $G_{ab}$, where each residue $i$ serves as a graph node $v_i$. Each node encapsulates both its amino acid type and 3D coordinate information, providing a comprehensive representation of its chemical and spatial properties.
We formalize the intra-graph as $G_{\textit{Intra}} = (V_{\textit{Intra}}, E_{\textit{Intra}})$, where $V_{\textit{Intra}} = V_{ae} \cup V_{ab}$ represents the union of antigen epitope residues $V_{ae}$ and antibody residues $V_{ab}$. Edge construction follows a distance-based approach, connecting residues within their respective chains (either epitope or antibody). For each residue $v_i$, we establish connections to its $k$-nearest neighbors $v_j$ within the same chain based on Euclidean distances between their spatial coordinates. The shortest distance between two residues is defined as the minimum distance between any pair of atoms across two residues.
This construction ensures that the intra-graph captures meaningful spatial relationships within both the antigen epitope and antibody domains while maintaining their distinct molecular contexts.

{\bf Inter-Graph Construction}
The inter-graph construction process integrates both sequence and structural information from the antibody-antigen binding region.
We construct an inter-graph $G_{\textit{Inter}} = (V_{p}, V_{ae}, E_{\textit{Inter}})$ to model critical paratope-epitope interactions between antibody paratope residues $V_p$ and antigen epitope residues $V_{ae}$.
Edge construction in the inter-graph is based on embedding similarity between residue pairs. For paratope residues $v_i \in V_p$ and $v_j \in V_{ae}$ with embeddings $\vect{H}_i$ and $\vect{H}_j$, we compute bidirectional interaction features:
\begin{equation*}
    \vect{z}_{\textit{in}} = \vect{H}_i \oplus \vect{H}_j, \quad \vect{z}_{\textit{out}} = \vect{H}_j \oplus \vect{H}_i.
\end{equation*}
These concatenated embeddings are processed through a feed-forward neural network (FFN) to generate interaction scores:
\begin{equation}
    \textit{FFN}(\vect{z}) = \sigma(\vect{W}_2 \cdot \sigma(\vect{W}_1 \cdot \vect{z} + \vect{b}_1) + \vect{b}_2),
\end{equation}
where $ \vect{W}_1 \in \mathbb{R}^{d \times 2d}$, $ \vect{W}_2 \in \mathbb{R}^{d \times d}$, and $ \vect{b}_1, \vect{b}_2 \in \mathbb{R}^d$ are learnable parameters, $ \sigma$ is the activation function. The final pair-wise distance of $(v_i,v_j)$ combines interactions from both directions:
\begin{equation*}
    dist_{ij} = \textit{FFN}(\vect{z}_{\textit{in}}) + \textit{FFN}(\vect{z}_{\textit{out}}).
\end{equation*}
The edge set $E_{\textit{inter}}$ is then constructed by selecting $k$-nearest residues of each $v_i \in V_p$ based on these interaction scores.
This construction focuses exclusively on paratope-epitope interactions, providing a refined inter-graph structure of the binding interface that captures the essential dynamics of antibody-antigen docking.

\subsection{Residue Embedding}
\label{appendix-sec:embeds}

Our embedding framework captures comprehensive residue characteristics through two complementary components: residue-type and positional embeddings.
First, we encode residue-type information through $\vect{H}_i^{\textit{res}} \in \mathbb{R}^d$, which represents the biochemical properties of amino acid residue $v_i$.
This embedding resides in a matrix of dimensions $N_{\textit{res}} \times d$, where $N_{\textit{res}} = 20$ represents the number of different amino acid types.

To preserve structural context, we implement position-specific embeddings 
$\vect{H}_i^{\textit{pos}} \in \mathbb{R}^{d}$ that encode the sequential context of residues. This positional encoding maintains distinct indices for antigen, heavy chain, and light chain segments, enabling the model to differentiate between different components.
The position embedding matrix has dimensions $N_{\textit{seq}} \times d_{\textit{pos}}$, where $N_{\textit{seq}} = 192$ is the maximum length of the sequence, ensuring consistent positional representation across all residue types while maintaining chain-specific contexts.

Importantly, different segments of the sequence, such as the antigen, heavy chain, and light chain, utilize distinct position indices to allow the model to differentiate between these regions.

The final residue representation integrates both biochemical and positional information through a simple addition:
\begin{equation}
    \vect{H}_i = \vect{H}_i^{\textit{res}} + \vect{H}_i^{\textit{pos}}.
\end{equation}

\subsection{Template-based Coordinates}
\label{appendix-sec:coord}

\subsubsection{Template Generation}
\label{appendix-sec:subsec-template-coord}
The template generation process exploits a fundamental characteristic of antibody structures: the high conservation within framework regions (FRs). 
We define a residue as well-conserved when its amino acid type is maintained across at least 95\% of antibodies in the dataset - a threshold optimized through empirical analysis to balance the number of conserved residues against their structural variability \cite{kong2023dymean}. Using this criterion, we identify 16 and 18 well-conserved residues in the heavy and light chains, respectively. To construct template coordinates $\vect{X}^{\textit{temp}}$, we compute mean backbone atom coordinates (N, CA, C, O) in the main chains across all antibodies in the dataset, establishing an initial backbone template $\{\vect{X}_r \in \mathbb{R}^{3 \times 4} | r \in \mathcal{W}_{\textit{temp}} \}$, where $\mathcal{W}_{\textit{temp}}$ denotes the positions of well-conserved residues as detailed in Table \ref{tab:template_positions}.
\begin{table}[t]
\small
\centering
\vspace{-2mm}
\caption{Well-conserved residue positions used in template generation.}
\begin{tabular}{c|l|c}
\hline \hline
Chain & Positions & Count \\
\hline
Heavy (H) & 8, 15, 23, 41, 44, 50, 52, 98, 100, 102, 104, 118, 119, 121, 126, 127 & 16 \\
\hline
Light (L) & 5, 6, 16, 23, 41, 70, 75, 76, 79, 89, 91, 98, 102, 104, 118, 119, 121, 122 & 18 \\
\hline \hline
\end{tabular}
\label{tab:template_positions}
\end{table}

\subsubsection{Coordinate Initialization}
\label{appendix-sec:subsec-coord-init}
Our coordinate initialization strategy employs a template-based approach that preserves actual coordinates for the epitope region $V_e$ while utilizing template-derived coordinates for paratope regions requiring prediction.

Each residue is represented using 14 atoms in 3D space, with coordinates denoted as $\vect{X}_{\textit{ae}} = \{ \vect{X}_i \in \mathbb{R}^3 \mid v_i \in V_{ae} \}$ for known positions and $\vect{X}^{\textit{temp}} = \{ \vect{X}_r \in \mathbb{R}^3 \mid j \in \mathcal{W}_{\textit{temp}} \}$ for template-derived positions (refer to Appendix~\ref{appendix-sec:subsec-template-coord}).

For residues with missing coordinates in the template, linear interpolation is employed. The missing coordinate $\vect{X}_k$ for position $k$ is calculated as follows:
\begin{equation*}
    \vect{X}_k = \vect{X}_{\textit{left}} + \frac{k - \textit{left}}{\textit{right} - \textit{left}} (\vect{X}_{\textit{right}} - \vect{X}_{\textit{left}}),
\end{equation*}
where $\textit{left}$ and $\textit{right}$ denote indices of nearest known template coordinates. The intersection of these indices is denoted as $\mathcal{W}_{\textit{align}} = \mathcal{W}_{\textit{true}} \cap \mathcal{W}_{\textit{temp}}$.

To ensure interpolation remains within valid sequence boundaries, $\textit{left}$ and $\textit{right}$ are constrained as follows:  $0 \leq \textit{left} < k < \textit{right} \leq N_{\textit{seq}},$
where $ N_{\textit{seq}} $ is the total length of the original antibody sequence.   
This ensures that interpolation remains within the structural constraints of the antibody sequence while preserving the geometric continuity of the coordinates.  

To ensure structural coherence, we align template coordinates with true coordinates by optimizing a rigid-body transformation for shared indices $\mathcal{W}_{\textit{align}}$. The optimal transformation, comprising rotation matrix $\vect{Q} \in \mathbb{R}^{3 \times 3}$ and translation vector $\vect{t} \in \mathbb{R}^3$, is determined by solving the following optimization problem:
\begin{equation}
    \arg \min_{\vect{Q}, \vect{t}} \sum_{i \in \mathcal{W}_{\textit{Align}}, j \in \mathcal{W}_{\textit{temp}}} \left| \vect{X}_i - (\vect{Q} \vect{X}_j + \vect{t}) \right|^2,
\end{equation}
subject to $\vect{Q}^\top \vect{Q} = \vect{I}$. After interpolation, we obtain the complete antibody coordinates as  
\begin{equation}
    \vect{X}_{ab}^{'} = \vect{X}^{\textit{temp}} \cup \vect{X}^{\textit{interp}},
\end{equation}
where $\vect{X}^{\textit{interp}}$ represents the set of interpolated coordinates for positions that are within the sequence range but absent from the template. Formally,  
\begin{equation}
    \vect{X}^{\textit{interp}} = \{ \vect{X}_k \mid k = \{0, \cdots, N_{\textit{seq}}-1 \} \text{ and } k \notin \mathcal{W}_{\textit{temp}} \}.
\end{equation}
This ensures that interpolation is applied only to residues that are part of the original antibody sequence but do not have known coordinates in the template. The resulting transformation for alignment is then applied globally:
\begin{equation}
    \vect{X}_{ab}  = \vect{Q} \vect{X}_{ab}^{'} + \vect{t}, 
\end{equation}
This initialization procedure primarily focuses on backbone atoms (N, CA, C, O), with side-chain atoms initially positioned at their respective CA coordinates, establishing a foundation for subsequent refinement.

\subsubsection{Coordinate Processing}
\label{appendix-sec:subsec-coord-process}

Following initial coordinate assignment, we implement a comprehensive normalization procedure to ensure consistent scale and reference frame across the antibody-antigen complex $\vect{X}$. For a system with $N$ total atoms, we first define an indicator function $\mathbb{I}(i)$ that distinguishes between antigen and antibody atoms:
\begin{equation} 
\label{eq:coord-indicator}
    \mathbb{I}(i) = 
    \begin{cases}
        1 & \textit{if }v_i \in V_\textit{ae} \\
        0 & \textit{if }v_i \in V_\textit{ab}
    \end{cases}.
\end{equation} 
Using this indicator function, we compute chain-specific centers of mass:
\begin{equation} 
\label{eq:coord-center}
    \vect{X}_c = \frac{\sum_{i=1}^N \vect{X}_i \cdot \mathbb{I}(i)}{\sum_{i=1}^N \mathbb{I}(i)}.
\end{equation}
The coordinates are then centered relative to their respective chain centers $\vect{X}_i^{'} = \vect{X}_i -  \vect{X}_c$.
Finally, we apply dimension-specific normalization to ensure consistent scale:
\begin{equation*}
    \vect{X}_i = \frac{\vect{X}_i^{'} - \mu_d}{\sigma_d},
\end{equation*}
where $\mu_d$ and $\sigma_d$ represent the mean and standard deviation along dimension $d$ across all atoms, with a target standard deviation of $\sigma=10$. This processing pipeline ensures that all atomic coordinates are properly scaled and centered within their respective molecular contexts, facilitating subsequent structural prediction tasks. Finally, we update the global coordinates again based on Equations \ref{eq:coord-indicator} and \ref{eq:coord-center}.

\subsubsection{Paratope Coordinates Generation}
\label{appendix-sec:subsec-coord-generate}
Our paratope coordinate generation framework implements a structured approach to initialize and refine binding interface positions. The process begins by anchoring initial paratope coordinates to the epitope center $\vect{X}_0 = \vect{X}_c$
where $\vect{X}_0$ represents the initial paratope coordinates and $\vect{X}_c$ represents the coordinates of the epitope center calculated by Equation \ref{eq:coord-center}, which are assigned as initial positions for all atoms in the paratope region.
We then introduce controlled structural variations:
\begin{equation*}
    \vect{\epsilon}_{i} = 
    \begin{cases} \vect{\epsilon}_{ca} & \textit{for CA atoms} \\
    0.1\vect{\epsilon}_{o} + \vect{\epsilon}_{ca} & \textit{for other atoms}
\end{cases},
\end{equation*}
where 
$\vect{\epsilon}_{o}, \vect{\epsilon}_{ca} \sim  \mathcal{N}(0, 1)$. This differential noise incorporation ensures that CA atoms exhibit greater conformational flexibility while maintaining structural coherence for other atoms through correlated, smaller perturbations. 
The final paratope coordinates are computed as:
\begin{equation*}
    \vect{X}_{\textit{p}} = \vect{X}_0 + \vect{\epsilon}_{i}.
\end{equation*}

Our model maintains two distinct coordinate representations:  
\begin{equation}
    \vect{X}_{\textit{full}} = \vect{X}_{\textit{ae}} \cup \vect{X}_{\textit{ab}},
\end{equation}
where $\vect{X}_{\textit{full}}$ represents the complete antibody-antigen complex, with  
$\vect{X}_{\textit{ae}}$ denoting the epitope coordinates and $\vect{X}_{\textit{ab}}$ representing the full antibody structure, including framework and paratope regions.  

Similarly, we define the focused binding interface representation as:  
\begin{equation}
    \vect{X}_{\textit{inter}} = \vect{X}_{\textit{ae}} \cup \vect{X}_{\textit{p}},
\end{equation}
where $\vect{X}_{\textit{inter}}$ includes the epitope coordinates $\vect{X}_{\textit{ae}}$ and the generated paratope coordinates $\vect{X}_{\textit{p}}$, without the full antibody framework.  

This distinction ensures that the model can separately handle global structural representations $\vect{X}_{\textit{full}}$ and the more localized binding interactions $\vect{X}_{\textit{inter}}$ for improved prediction and analysis.

We employ the standardized IMGT/Chothia \cite{chothia1989cdr,lefranc2003imgt} to maintain structural consistency across antibody chains.This well-established system assigns independent position indices to each chain while preserving the original residue numbering from PDB structures. For the antibody heavy chain, the framework regions and CDRs follow specific positional ranges, with CDR-H1 typically spanning positions 23-35 and CDR-H3 occupying positions 104-118. Similarly, the light chain maintains its distinct numbering convention, with CDR-L1 typically located at positions 27-38 and CDR-L3 at positions 105-117. This independent numbering approach ensures clear differentiation between chain segments while facilitating accurate structural analysis and prediction. To maintain a clear distinction between antibody and antigen components, all antigen residues are uniformly assigned position 0, allowing the model to effectively distinguish between interacting molecular components.

\subsection{Epitope Selection}
\label{appendix-sec:epitope}
We implement a distance-based approach for epitope selection that identifies the most relevant antigen residues interacting with the antibody CDR-H3. For each antigen residue $v_i \in V_{ag}$, we compute its minimal distance to any CDR-H3 residue $v_j$:
$$
    \hat{dist}(v_i) = \min_{v_j} dist_{i, j},
$$
From these distances, we select the residues with top-$k$ (typically $k=48$) smallest distances to form the antigen epitope:
\begin{equation*}
    V_{\textit{ae}} = \{ \arg\min_k \hat{dist}(v_i) \mid \forall v_i \in V_{ag} \}.
\end{equation*}
This selection process ensures that we focus on the most critical residues involved in antibody-antigen binding while maintaining computational efficiency. The fixed-size antigen epitope selection provides a consistent representation of the binding interface, enabling robust structural prediction and analysis.

\subsection{Proofs}
\label{appendix-sec:proof}
In this section, we present the proofs of theorems in this paper, beginning with formal definitions of SE(3) equivariance and SE(3) invariance.

\subsubsection{Preliminaries}
\label{appendix-sec:subsec-e3-def}
\begin{definition}[E(3)-equivariance and SE(3)-invariance]
\label{def:e3-equiv-inv}
Let $\varphi : \mathcal{X} \to \mathcal{Y}$ be a mapping function, and let $T \in E(3)$ denote a rigid transformation in three-dimensional space comprising rotation and reflection, $\vect{Q} \in O(3)$ and a translation $\vect{b}\in\mathbb{R}^3$. The transformation acts on coordinates as:
$$
   T(\vect{X}) := \vect{Q}\vect{X} + \vect{b}, 
   \quad \vect{Q} \in O(3), \vect{b} \in \mathbb{R}^3.
$$
Then:
\begin{itemize}[topsep=0.5mm, partopsep=0pt, itemsep=0pt, leftmargin=10pt]
    \item \(\varphi\) is \textbf{E(3)-equivariant}, if for any transformation $\vect{X}\mapsto \vect{X}_{\text{out}} = \varphi(\vect{X})$, we have $T(\vect{X}_{\text{out}}) = \varphi\bigl(T(\vect{X})\bigr)$.
    \item \(\varphi\) is \textbf{E(3)-invariant} if $\varphi \bigl(T ( \vect{X})\bigr) = \varphi(\vect{X})$ for all \(T\in \mathrm{E}(3)\), i.e. $T$ acts as identity transformation in the output space.
\end{itemize}
\end{definition}

In our model architecture, 3D coordinates are E(3)-equivariant, meaning they transform consistently with input transformations, while residue embeddings demonstrate E(3)-invariance, remaining unchanged under rigid transformations. 

{\bf Notations.} Next, we summarize the notations used throughout our analysis.
\begin{itemize}[topsep=0.5mm, partopsep=0pt, itemsep=0pt, leftmargin=10pt]
    \item Let $\{(\vect{H}_i^{(l)}, \vect{X}_i^{(l)})\}_{i=1}^n$ denote the residue embeddings \(\vect{H}_i^{(l)} \in \mathbb{R}^d\) and 3D coordinates $\vect{X}_i^{(l)} \in \mathbb{R}^3$ at layer $l$.  
    \item Let $\vect{H}_{e_{ij}}^{(l)}$ be the edge feature between residues $v_i$ and $v_j$ at layer $l$.  
    \item Let $T(\vect{X}) := \vect{Q}\vect{X} + \vect{b}$ denote a rigid transformation in $E(3)$.
\end{itemize}

Below, we will show that Igformer maintains coordinate equivariance and embedding invariance under these transformations at each layer, following Definition~\ref{def:e3-equiv-inv}.

\subsubsection{E(3)-Equivariance of the Coordinates}

We begin by establishing the E(3)-equivariance property of coordinate predictions in the EMP module.  

\begin{lemma}[Linear Transformation of Coordinate Differences]
\label{lemma:deltaX}
    For coordinates differences \(\Delta \vect{X}_{ij}^{(l)} := \vect{X}_i^{(l)} - \vect{X}_j^{(l)}\), applying \(T\in E(3)\) to coordinates:
    $
       \widetilde{\vect{X}}_i^{(l)} = T(\vect{X}_i^{(l)}) 
       = \vect{Q}\,\vect{X}_i^{(l)} + \vect{b},
    $
    results in  
    $
       \Delta \widetilde{\vect{X}}_{ij}^{(l)} = \vect{Q}\,\Delta \vect{X}_{ij}^{(l)}.
    $
\end{lemma}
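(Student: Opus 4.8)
\textbf{Proof proposal for Lemma~\ref{lemma:deltaX}.}

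The plan is to carry out a direct computation using the linearity of the $E(3)$ action. The statement is a genuinely elementary identity; the only ``work'' is to unwind the definitions carefully and cancel the translation term.

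First I would write out the transformed coordinates explicitly from the definition of $T$: for each residue index $i$ we have $\widetilde{\vect{X}}_i^{(l)} = \vect{Q}\,\vect{X}_i^{(l)} + \vect{b}$ and likewise $\widetilde{\vect{X}}_j^{(l)} = \vect{Q}\,\vect{X}_j^{(l)} + \vect{b}$, where $\vect{Q}\in O(3)$ and $\vect{b}\in\mathbb{R}^3$ (the translation vector broadcast across the atom columns of the coordinate matrix). Then I would substitute these into the definition $\Delta \widetilde{\vect{X}}_{ij}^{(l)} := \widetilde{\vect{X}}_i^{(l)} - \widetilde{\vect{X}}_j^{(l)}$, obtaining
\[
   \Delta \widetilde{\vect{X}}_{ij}^{(l)}
   = \bigl(\vect{Q}\,\vect{X}_i^{(l)} + \vect{b}\bigr) - \bigl(\vect{Q}\,\vect{X}_j^{(l)} + \vect{b}\bigr).
\]
The translation terms $\vect{b}$ cancel, and by linearity of matrix multiplication the $\vect{Q}$ factors out:
\[
   \Delta \widetilde{\vect{X}}_{ij}^{(l)}
   = \vect{Q}\,\vect{X}_i^{(l)} - \vect{Q}\,\vect{X}_j^{(l)}
   = \vect{Q}\bigl(\vect{X}_i^{(l)} - \vect{X}_j^{(l)}\bigr)
   = \vect{Q}\,\Delta \vect{X}_{ij}^{(l)},
\]
which is exactly the claimed identity. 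One subtlety worth a remark is that $\vect{X}_i^{(l)}\in\mathbb{R}^{3\times c_i}$ is a matrix whose columns are atom positions, so the translation $\vect{b}$ is added to every column; since the same $\vect{b}$ is added to $\vect{X}_i^{(l)}$ and $\vect{X}_j^{(l)}$, it cancels columnwise, and the argument goes through unchanged at the matrix level.

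There is essentially no obstacle here: the lemma is the translation-invariance of differences together with the $\mathbb{R}$-linearity of left multiplication by $\vect{Q}$. The only thing to be careful about is bookkeeping — making sure the broadcast of $\vect{b}$ over atom columns is stated so that the cancellation is unambiguous — and flagging that this identity is precisely what makes the downstream similarity quantities $sim_{ij}^{res}$ and $sim_{ij(m,n)}^{atom}$ invariant (since $\vect{Q}^\top\vect{Q}=\vect{I}$ for $\vect{Q}\in O(3)$), which is how the lemma will be used in the proof of Theorem~\ref{thm:thm1-emp}.
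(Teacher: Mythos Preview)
Your proposal is correct and follows essentially the same approach as the paper: a direct substitution showing that the translation $\vect{b}$ cancels in the difference and the orthogonal matrix $\vect{Q}$ factors out by linearity. Your additional remark about broadcasting $\vect{b}$ over the atom columns of $\vect{X}_i^{(l)}\in\mathbb{R}^{3\times c_i}$ is a useful clarification not made explicit in the paper, but the underlying argument is identical.
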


\begin{proof}
By direct substitution:
\[
  \widetilde{\vect{X}}_i^{(l)} - \widetilde{\vect{X}}_j^{(l)}
  = (\vect{Q}\vect{X}_i^{(l)} + \vect{b}) - (\vect{Q}\vect{X}_j^{(l)} + \vect{b})
  = \vect{Q}\bigl(\vect{X}_i^{(l)} - \vect{X}_j^{(l)}\bigr) = \vect{Q}\Delta \vect{X}_{ij}^{(l)}.
\]
The translation $\vect{b}$ cancels, leaving only the rotation $\vect{Q}$.
\end{proof}

\begin{lemma}[Geometry-Based Operations]
\label{lemma:geom-invariant}
    Geometric Operations in Igformer, including pairwise distances $\|\vect{X}_i - \vect{X}_j\|$, dot products, and dihedral angles, remain invariant under any $T \in E(3)$.
\end{lemma}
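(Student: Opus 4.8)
The plan is to show that every geometric primitive used inside Igformer depends on atomic coordinates only through norms and inner products of coordinate \emph{differences}, and then to invoke orthogonality of the linear part $\vect{Q}\in O(3)$ of the transformation $T(\vect{X}) = \vect{Q}\vect{X} + \vect{b}$. First I would recall, via Lemma~\ref{lemma:deltaX}, that applying $T$ sends every coordinate difference $\Delta\vect{X}_{ij} = \vect{X}_i - \vect{X}_j$ to $\vect{Q}\,\Delta\vect{X}_{ij}$, so the translation $\vect{b}$ never enters a geometric quantity; it therefore suffices to verify invariance under the linear map $\vect{v}\mapsto \vect{Q}\vect{v}$ with $\vect{Q}^\top\vect{Q}=\vect{I}$.

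For pairwise distances this is immediate: $\|\vect{Q}\,\Delta\vect{X}_{ij}\|^2 = (\Delta\vect{X}_{ij})^\top\vect{Q}^\top\vect{Q}\,\Delta\vect{X}_{ij} = \|\Delta\vect{X}_{ij}\|^2$, so $\|\vect{X}_i - \vect{X}_j\|$ is unchanged. The same one-line computation, applied per coordinate, handles the atom-level similarities $sim^{atom}_{ij(m,n)}$ (sums of squared differences of atom positions), and since $\langle\vect{Q}\vect{u},\vect{Q}\vect{v}\rangle = \vect{u}^\top\vect{Q}^\top\vect{Q}\vect{v} = \langle\vect{u},\vect{v}\rangle$, it also handles every dot product of coordinate differences that feeds the residue-level term $sim^{res}_{ij}$ and hence the similarity matrix $\vect{S}^{(l)}$ of the EMP module. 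So all distances and dot products are $E(3)$-invariant.

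For dihedral angles I would express the angle $\phi$ about the central bond through the three consecutive bond vectors $\vect{b}_1 = \vect{X}_2 - \vect{X}_1$, $\vect{b}_2 = \vect{X}_3 - \vect{X}_2$, $\vect{b}_3 = \vect{X}_4 - \vect{X}_3$ and the plane normals $\vect{n}_1 = \vect{b}_1\times\vect{b}_2$, $\vect{n}_2 = \vect{b}_2\times\vect{b}_3$, recovering $\phi$ from $\vect{n}_1\cdot\vect{n}_2$ (cosine part) and $(\vect{n}_1\times\vect{n}_2)\cdot\vect{b}_2$ (orientation/sine part). Using the identity $(\vect{Q}\vect{u})\times(\vect{Q}\vect{v}) = \det(\vect{Q})\,\vect{Q}(\vect{u}\times\vect{v})$ for $\vect{Q}\in O(3)$, the term $\vect{n}_1\cdot\vect{n}_2$ picks up a factor $\det(\vect{Q})^2 = 1$ and is invariant, while the orientation term picks up a single factor $\det(\vect{Q})$; together with distance invariance (which fixes the bond lengths), $\phi$ is therefore unchanged under any rotation-plus-translation and changes only by a sign under an improper $\vect{Q}$.

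The main obstacle — indeed the only place where there is anything at all to say — is this reflection sign in the dihedral term; the distance and inner-product parts reduce to the orthogonality identity. I expect to resolve it by observing that Igformer consumes dihedral information only through reflection-insensitive quantities (cosines / symmetrized features in the EMP feature construction), so the sign ambiguity is immaterial and full $E(3)$-invariance holds; the fallback, if a genuinely signed dihedral were used, is to state that particular claim for the $SE(3)$ subgroup ($\det\vect{Q}=1$), which is anyway the regime in which Theorems~\ref{thm:thm1-emp} and~\ref{thm:thm-igformer} are applied downstream. Everything else is bookkeeping: once the primitive operations are invariant, invariance propagates through the sequential steps of EMP by composition.
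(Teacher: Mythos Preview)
Your proposal is correct and follows essentially the same approach as the paper: cancel translations by passing to coordinate differences, then invoke $\vect{Q}^\top\vect{Q}=\vect{I}$ to preserve norms and inner products. The paper's own proof is a two-line sketch that does not separately treat dihedral angles at all, so your explicit analysis of the reflection sign (and the observation that only reflection-insensitive quantities are actually consumed) is more careful than what appears in the paper.
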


\begin{proof}
   A translation $\vect{b}$ always cancels in $\vect{X}_i - \vect{X}_j$.  A rotation $\vect{Q}\in O(3)$ preserves Euclidean norms and dot products:$ \|\vect{Q}\vect{v}\|^2 = \|\vect{v}\|^2$. Therefore, all geometric operations remain invariant under $T$.
\end{proof}

\begin{theorem}[E(3)-Equivariance of Coordinates in EMP]
\label{thm:coord-equivar}
    At layer \(l\), Igformer updates coordinates through the following sequential operations:
    \begin{align*}
      \vect{S}_{ij}^{(l)} 
       &= \textit{MLP}\bigl(\|\Delta \vect{X}_{ij}^{(l)}\|^2,\,\dots\bigr), \quad \text{(geometry-based score)} \\
      \vect{H}_{e_{ij}}^{(l+1)} 
       &= \textit{EdgeMLP} \left(\vect{H}_i^{(l)} \oplus \vect{H}_j^{(l)} \oplus \vect{S}_{ij}^{(l)} \oplus \vect{H}_{e_{ij}}^{(l)} \right), \\
      \vect{X}_i^{(l+1)}
       &= \vect{X}_i^{(l)} + \sum_{j\in \mathcal{N}(i)}
            \Delta \vect{X}_{ij}^{(l)} \cdot
            \textit{CoordMLP}\bigl(\vect{H}_{e_{ij}}^{(l+1)}\bigr).
    \end{align*}
    This update rule is E(3)-equivariant on coordinates. Concretely, applying $T$ to input coordinates $\{\vect{X}_i^{(l)}\}$ results in the same transformation $T$ being applied to output coordinates $\{\vect{X}_i^{(l+1)}\}$.
\end{theorem}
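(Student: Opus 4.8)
The plan is to verify E(3)-equivariance by propagating the rigid transformation $T(\vect{X}) = \vect{Q}\vect{X} + \vect{b}$ through each of the four sequential operations of the EMP update, showing that the scalar quantities (similarity scores, edge features, node embeddings) are left invariant while the coordinate output picks up exactly the same transformation $T$. The two lemmas already in hand do most of the work: Lemma~\ref{lemma:deltaX} tells us that coordinate differences transform by the rotation alone, $\Delta\widetilde{\vect{X}}_{ij}^{(l)} = \vect{Q}\,\Delta\vect{X}_{ij}^{(l)}$, with the translation cancelling; and Lemma~\ref{lemma:geom-invariant} tells us that any quantity built purely from inner products of such differences (in particular $sim_{ij}^{res} = \Delta\vect{X}_{ij}^{(l)}(\Delta\vect{X}_{ij}^{(l)})^\top$ and the atom-level $sim_{ij(m,n)}^{atom}$, each of which is a sum of products of differences) is unchanged, since $\|\vect{Q}\vect{v}\|^2 = \|\vect{v}\|^2$ and more generally $(\vect{Q}\vect{u})^\top(\vect{Q}\vect{v}) = \vect{u}^\top\vect{v}$ for $\vect{Q}\in O(3)$.

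First I would treat the similarity step: since $sim_{ij}^{res}$ and $sim_{ij(m,n)}^{atom}$ are invariant by Lemma~\ref{lemma:geom-invariant}, so are $\textit{MLP}_1(sim_{ij}^{res})$, $\textit{MLP}_2(sim_{ij}^{atom})$, and hence the convex combination $\vect{S}_{ij}^{(l)}$; the argument is just that invariant inputs to a deterministic function give invariant outputs. Second, the edge update $\vect{H}_{e_{ij}}^{(l+1)} = \textit{EdgeMLP}(\vect{H}_i^{(l)}\oplus\vect{H}_j^{(l)}\oplus\vect{S}_{ij}^{(l)}\oplus\vect{H}_{e_{ij}}^{(l)})$ takes only invariant arguments (the embeddings by the inductive hypothesis that layer-$l$ embeddings are invariant, $\vect{S}_{ij}^{(l)}$ by the previous step, the incoming edge feature by induction), so $\vect{H}_{e_{ij}}^{(l+1)}$ is invariant. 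Third, for the coordinate update I substitute $\widetilde{\vect{X}}_i^{(l)} = T(\vect{X}_i^{(l)})$ and compute
\begin{align*}
\widetilde{\vect{X}}_i^{(l+1)} &= \widetilde{\vect{X}}_i^{(l)} + \sum_{j\in\mathcal{N}(i)} \Delta\widetilde{\vect{X}}_{ij}^{(l)}\cdot \textit{CoordMLP}(\vect{H}_{e_{ij}}^{(l+1)}) \\
&= \vect{Q}\vect{X}_i^{(l)} + \vect{b} + \vect{Q}\sum_{j\in\mathcal{N}(i)} \Delta\vect{X}_{ij}^{(l)}\cdot \textit{CoordMLP}(\vect{H}_{e_{ij}}^{(l+1)}) = T(\vect{X}_i^{(l+1)}),
\end{align*}
using Lemma~\ref{lemma:deltaX} for the differences and the invariance of $\vect{H}_{e_{ij}}^{(l+1)}$ (so the scalar weights $\textit{CoordMLP}(\cdot)$ are identical in both the transformed and untransformed computations). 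Finally the node update $\vect{H}_i^{(l+1)} = \vect{H}_i^{(l)} + \textit{NodeMLP}(\vect{H}_i^{(l)}\oplus\sum_j \vect{H}_{e_{ij}}^{(l+1)})$ again has only invariant inputs, so $\vect{H}_i^{(l+1)}$ is invariant, closing the induction.

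The main obstacle is not any single calculation but making the bookkeeping precise: one must set up a clean inductive hypothesis at the start of layer $l$ (embeddings and edge features invariant, coordinates equivariant in the sense that replacing the input by its $T$-image replaces every intermediate coordinate by its $T$-image) and then verify that the hypothesis is reproduced at layer $l+1$. The one point requiring a little care is that $\textit{CoordMLP}$ outputs a \emph{scalar} multiplier per neighbor — the equivariance of the coordinate sum hinges on this, since $\vect{Q}$ factors out of $\sum_j \Delta\vect{X}_{ij}^{(l)}\,c_{ij}$ only because $c_{ij}$ is a scalar (or acts by scalar multiplication); if $\textit{CoordMLP}$ returned a matrix acting on $\Delta\vect{X}_{ij}^{(l)}$ the argument would fail. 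I would state this explicitly. Everything else reduces to the slogan ``deterministic functions of invariants are invariant, and $\vect{Q}$ pulls out of sums of rotated vectors weighted by invariant scalars.''
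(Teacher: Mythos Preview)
Your proposal is correct and follows essentially the same approach as the paper's proof: invoke Lemma~\ref{lemma:geom-invariant} to obtain invariance of $\vect{S}_{ij}^{(l)}$ and hence of $\vect{H}_{e_{ij}}^{(l+1)}$, then use Lemma~\ref{lemma:deltaX} to factor $\vect{Q}$ out of the coordinate-update sum and recover $T(\vect{X}_i^{(l+1)})$. Your explicit framing of the inductive hypothesis and your caveat that $\textit{CoordMLP}$ must act as a scalar (more precisely, must not mix the spatial coordinate axes) are helpful clarifications that the paper leaves implicit; note also that the paper treats your fourth step, invariance of $\vect{H}_i^{(l+1)}$, in the separate Theorem~\ref{thm:embed-inv} rather than here.
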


\begin{proof}
By Lemma~\ref{lemma:geom-invariant}, $\vect{S}_{ij}^{(l)}$ and consequently $\vect{H}_{e_{ij}}^{(l+1)}$ remain unchanged when input coordinates are transformed by $T$. 
%$\vect{X}_i^{(l)}$ is replaced by $Q\,\vect{X}_i^{(l)}+\vect{t}$.  
Let
$
   \vect{w}_{ij}^{(l+1)} := \textit{CoordMLP}\bigl(\vect{H}_{e_{ij}}^{(l+1)}\bigr).
$
Then:
$$
   \vect{X}_i^{(l+1)} = \vect{X}_i^{(l)} + \sum_{j\in \mathcal{N}(i)}
   \Delta \vect{X}_{ij}^{(l)} \cdot \vect{w}_{ij}^{(l+1)}.
$$
Under transformation $T: \vect{X}_i^{(l)}\mapsto \vect{Q}\vect{X}_i^{(l)}+\vect{b}$, we have:
$$
\begin{aligned}
   \widetilde{\vect{X}}_i^{(l+1)}
   &= \widetilde{\vect{X}}_i^{(l)}
      + \sum_{j\in\mathcal{N}(i)}
        \bigl[\widetilde{\vect{X}}_i^{(l)} - \widetilde{\vect{X}}_j^{(l)}\bigr]
        \cdot \vect{w}_{ij}^{(l+1)}
   \quad &\bigl(\text{same } \vect{w}_{ij}^{(l+1)}\bigr) 
   \\[4pt]
   &= (\vect{Q}\vect{X}_i^{(l)}+\vect{b})
      +
      \sum_{j\in\mathcal{N}(i)}
        \Bigl[\vect{Q}(\vect{X}_i^{(l)}-\vect{X}_j^{(l)})\Bigr]
        \cdot \vect{w}_{ij}^{(l+1)}
   &\bigl(\text{Lemma~\ref{lemma:deltaX}}\bigr)
   \\[4pt]
   &= \vect{Q}\,\vect{X}_i^{(l)} + \vect{b}
      + \vect{Q}\sum_{j\in\mathcal{N}(i)}
          (\vect{X}_i^{(l)} - \vect{X}_j^{(l)}) \cdot \vect{w}_{ij}^{(l+1)}
   \\[4pt]
   &= \vect{Q} \Bigl[\vect{X}_i^{(l)}
       + \sum_{j\in\mathcal{N}(i)}
         \Delta \vect{X}_{ij}^{(l)} \cdot \vect{w}_{ij}^{(l+1)}\Bigr]
       + \vect{b}
   \\[2pt]
   &= \vect{Q}\vect{X}_i^{(l+1)} + \vect{b} = T\bigl(\vect{X}_i^{(l+1)}\bigr).
\end{aligned}
$$
Therefore, applying $g$ to input coordinates leads to $T\bigl(\vect{X}_i^{(l+1)}\bigr)$ in output coordinates. By induction over all layers, Igformer maintains E(3)-equivariance through its coordinate computation in the EMP module.
\end{proof}

\subsubsection{E(3)-Invariance of the Residue Embeddings}

We now prove that residue embeddings $\vect{H}_i$ remain numerically invariant under global rigid transformation of the input 3D structure.

\begin{lemma}[Base Case for Embedding Invariance]
\label{lemma:base-embed}
    Residue embeddings \(\vect{H}_i^{(0)}\) are initialized using rigid independent features like residue identity, positional index, etc., making them inherently invariant under any transformation $T\in \mathrm{E}(3)$.
\end{lemma}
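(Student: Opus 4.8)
The claim is that the initial residue embeddings $\vect{H}_i^{(0)}$ are E(3)-invariant. The plan is to argue this directly from the construction of the embeddings given in Section~\ref{sec:subsec-init} and Appendix~\ref{appendix-sec:embeds}: the embedding $\vect{H}_i^{(0)} = \vect{H}_i^{\textit{res}} + \vect{H}_i^{\textit{pos}}$ is built only from the amino acid type index $s_i$ (selecting a row of the $N_{\textit{res}}\times d$ residue-type embedding matrix) and the chain-aware positional index of $v_i$ (selecting a row of the $N_{\textit{seq}}\times d$ positional embedding matrix). Neither of these quantities is a function of the 3D coordinates $\vect{X}$, so they are manifestly unchanged when the coordinates are acted on by any $T\in E(3)$.

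Concretely, the steps I would carry out are: first, recall from Definition~\ref{def:e3-equiv-inv} that invariance means $\varphi(T(\vect{X})) = \varphi(\vect{X})$ for all $T \in E(3)$, i.e.\ we must show the map $\vect{X}\mapsto \vect{H}_i^{(0)}$ is constant in $\vect{X}$. Second, invoke the initialization formulas: $\vect{H}_i^{\textit{res}}$ depends only on the amino acid identity $s_i\in\{1,\dots,20\}$ and $\vect{H}_i^{\textit{pos}}$ depends only on the sequential/chain position index of residue $v_i$; these are combinatorial attributes of the sequence and numbering scheme, fixed before any geometric processing. Third, observe that a rigid transformation $T(\vect{X}) = \vect{Q}\vect{X} + \vect{b}$ changes only coordinates and leaves the index data $\{s_i\}$ and positional indices untouched, hence $\vect{H}_i^{\textit{res}}$ and $\vect{H}_i^{\textit{pos}}$ are unchanged, and so is their sum $\vect{H}_i^{(0)}$. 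This establishes the base case and, together with the inductive step in the subsequent layers, seeds the embedding-invariance part of Theorem~\ref{thm:thm-igformer}.

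There is essentially no analytical obstacle here — the statement is a definitional/bookkeeping observation rather than a computation. The only ``work'' is being careful about what counts as input: one must note that although $\vect{H}_i^{(0)}$ will later be combined with coordinate-derived quantities inside EMP, at initialization it is a pure function of sequence and numbering data, so the claim is immediate. The mild subtlety worth spelling out is that the positional embedding uses distinct index ranges for antigen, heavy chain, and light chain, but this assignment too is determined by the chain labels of residues, not by their positions in $\mathbb{R}^3$, and is therefore invariant under $T$. Hence $\vect{H}_i^{(0)}$ is E(3)-invariant, completing the base case.
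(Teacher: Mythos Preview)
Your proposal is correct and matches the paper's reasoning. In fact, the paper provides no separate proof for this lemma at all---the justification is folded into the lemma statement itself (``initialized using rigid independent features \ldots making them inherently invariant''), treating it as self-evident from the construction in Section~\ref{sec:subsec-init} and Appendix~\ref{appendix-sec:embeds}; your write-up simply spells out that bookkeeping explicitly.
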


\begin{lemma}[Edge Feature Update Invariance]
\label{lemma:edge-MLP}
For edge updates of the form:
$$
   \vect{H}_{e_{ij}}^{(l+1)} =
   \textit{EdgeMLP}\Bigl(
      \vect{H}_i^{(l)} \oplus
      \vect{H}_j^{(l)} \oplus
      S_{ij}^{(l)} \oplus
      \vect{H}_{e_{ij}}^{(l)}
   \Bigr),
$$
where $\vect{S}_{ij}^{(l)}$ represents geometry operations (distance, dot product, etc.). If \(\vect{H}_i^{(l)}, \vect{H}_j^{(l)}\) are invariant and $\vect{S}_{ij}^{(l)}$ is unaffected by $T$ (Lemma~\ref{lemma:geom-invariant}), then \(\vect{H}_{e_{ij}}^{(l+1)}\) maintains invariance.
\end{lemma}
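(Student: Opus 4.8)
The plan is to prove \Cref{lemma:edge-MLP} as the inductive step of a joint induction over layers $l$, in which the induction hypothesis asserts that \emph{both} the node embeddings $\vect{H}_i^{(l)}$ and the edge features $\vect{H}_{e_{ij}}^{(l)}$ are numerically unchanged when the input coordinates $\{\vect{X}_i^{(0)}\}$ are replaced by $\{T(\vect{X}_i^{(0)})\}$ for an arbitrary $T\in E(3)$. The base case for the node embeddings is \Cref{lemma:base-embed}; the edge features $\vect{H}_{e_{ij}}^{(0)}$ are likewise initialized from rigid-invariant data (residue identities, sequence indices, or a constant), so they are invariant as well. Given this setup, the lemma is essentially a bookkeeping consequence of the fact that a neural map with fixed, transformation-independent parameters preserves whatever invariance its inputs possess.

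For the inductive step I would argue as follows. The map $\textit{EdgeMLP}$ is a fixed composition of affine layers and pointwise nonlinearities whose only argument is the concatenation $\vect{H}_i^{(l)} \oplus \vect{H}_j^{(l)} \oplus \vect{S}_{ij}^{(l)} \oplus \vect{H}_{e_{ij}}^{(l)}$. By the induction hypothesis the first, second, and fourth blocks are unchanged under $T$. For the third block, $\vect{S}_{ij}^{(l)}$ is assembled by $\textit{MLP}_1$ and $\textit{MLP}_2$ from the scalars $sim_{ij}^{res} = \Delta\vect{X}_{ij}^{(l)}(\Delta\vect{X}_{ij}^{(l)})^{\top}$ and $sim_{ij(m,n)}^{atom}$, which are squared norms and squared interatomic distances built from the differences $\Delta\vect{X}_{ij}^{(l)}$; by \Cref{lemma:deltaX} these differences acquire only the orthogonal factor $\vect{Q}$, and by \Cref{lemma:geom-invariant} the resulting inner products are invariant, so $\vect{S}_{ij}^{(l)}$ is unchanged. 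Hence every component of the concatenated input to $\textit{EdgeMLP}$ is unchanged, and therefore $\vect{H}_{e_{ij}}^{(l+1)}$ is unchanged, which is the claimed invariance. One then closes the induction by noting that the node update $\vect{H}_i^{(l+1)} = \vect{H}_i^{(l)} + \textit{NodeMLP}\bigl(\vect{H}_i^{(l)} \oplus \sum_{j\in\mathcal{N}(i)} \vect{H}_{e_{ij}}^{(l+1)}\bigr)$ is also built only from quantities just shown invariant, so $\vect{H}_i^{(l+1)}$ is invariant too.

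I do not expect a genuine obstacle in the lemma itself; the one point that needs care is that the hypothesis "$\vect{S}_{ij}^{(l)}$ is unaffected by $T$" is only legitimate if the layer-$l$ coordinates themselves transform as $\vect{X}_i^{(l)} \mapsto T(\vect{X}_i^{(l)})$, so that $\Delta\vect{X}_{ij}^{(l)} \mapsto \vect{Q}\,\Delta\vect{X}_{ij}^{(l)}$ at every layer, not merely at the input. This is supplied by \Cref{thm:coord-equivar}, so strictly speaking the embedding-invariance chain (\Cref{lemma:base-embed}, \Cref{lemma:edge-MLP}, and the node-update step) and the coordinate-equivariance statement (\Cref{thm:coord-equivar}) should be run as a single simultaneous induction over $l$ rather than as separate arguments; the present lemma is then exactly the "edge" half of that combined inductive step.
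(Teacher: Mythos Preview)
Your proposal is correct and follows essentially the same approach as the paper: the paper's proof is a one-line observation that \textit{EdgeMLP} receives numerically identical inputs under the transformation, so its output is unchanged. Your version is more careful---you make explicit the joint induction on node and edge features and, in your final paragraph, correctly flag that the invariance of $\vect{S}_{ij}^{(l)}$ at layer $l$ presupposes coordinate equivariance at that layer, so the embedding-invariance and coordinate-equivariance arguments should really be run as a single simultaneous induction; the paper states these as separate results and glosses over this interdependence.
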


\begin{proof}
    The \textit{EdgeMLP} receives numerically identical inputs regardless of coordinate transformation $\vect{Q}\vect{X}_i^{(l)}+\vect{b}$, ensuring $\vect{H}_{e_{ij}}^{(l+1)}$ remains unchanged.
\end{proof}

\begin{theorem}[E(3)-Invariance of Residue Embeddings]
\label{thm:embed-inv}
    Given Igformer's residue embedding update rule:
    $$
      \vect{H}_i^{(l+1)} = \vect{H}_i^{(l)} + 
      \textit{NodeMLP}\Bigl(
        \vect{H}_i^{(l)} \oplus
        \sum_{j\in \mathcal{N}(i)} \vect{H}_{e_{ij}}^{(l+1)}
      \Bigr),
    $$
    or its self-attention variant, the embeddings $\vect{H}_i^{(l+1)}$ remain invariant under any global transformation $T\in \mathrm{E}(3)$. By extension, Igformer's final embeddings $\vect{H}_i$ maintain E(3)-invariance.
\end{theorem}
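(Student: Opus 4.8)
The plan is to prove Theorem~\ref{thm:embed-inv} by a single induction that walks through the Igformer pipeline of Section~\ref{sec:subsec-pipeline} and simultaneously tracks two invariants: (i) every residue embedding computed so far is \emph{numerically unchanged} when the input coordinates $\vect{X}$ are replaced by $T(\vect{X})$, and (ii) every coordinate array computed so far transforms by the very same $T$ — this second half being exactly Theorem~\ref{thm:coord-equivar}, which we are free to invoke. The base case is Lemma~\ref{lemma:base-embed}: the initial embeddings $\vect{H}_i^{(0)} = \vect{H}_i^{res} + \vect{H}_i^{pos}$ are assembled solely from residue identity and chain-aware positional indices, hence do not depend on coordinates at all and are trivially $T$-invariant.

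For the inductive step I would process the modules in pipeline order. First, graph construction: intra-graph edges are the $k$-nearest neighbours under atomic Euclidean distances, which are invariant by Lemma~\ref{lemma:geom-invariant}, so the topology of $G_{intra}$ is unchanged by $T$; inter-graph edges are the $k$-nearest neighbours under the learned distances $dist_{ij} = \textit{EdgeMLP}(\vect{H}_i \oplus \vect{H}_j,\, \vect{H}_j \oplus \vect{H}_i)$, which depend only on (invariant) embeddings, so $G_{inter}$ is likewise unchanged. With neighbourhoods fixed, APP in Equation~\ref{eq:eq-app} is an affine recursion in the embeddings with a fixed propagation matrix $\hat{\vect{P}}$ and thus maps invariant inputs to invariant outputs; the SGFormer weights $a_{ij}$ are functions of $\vect{H}_i^{\top} \vect{W} \vect{H}_j$ alone and its update $\sum_{j} a_{ij}\vect{H}_j$ is a convex combination of invariant vectors, so invariance is preserved. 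The Triangle Multiplicative and Axial Attention blocks act on $\vect{Z}_{ij} = \textit{LayerNorm}(\textit{MLP}([\vect{H}_i \oplus \vect{H}_j]))$ and afterwards use only linear maps, gating, and softmax over these quantities, so the diagonal readout $\vect{H}_p$ in Equation~\ref{eq:eq-paratope-final} is again invariant.

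The heart of the argument is the dual-scale EMP block of Equation~\ref{eq:eq-dual-emp}, where the EMP analysis is applied twice (once on $G_{intra}$, once on $G_{inter}$): Lemma~\ref{lemma:geom-invariant} makes the similarity scores $\vect{S}_{ij}$ invariant, Lemma~\ref{lemma:edge-MLP} propagates this to the edge features $\vect{H}_{e_{ij}}^{(l+1)}$, and then the node update $\vect{H}_i^{(l+1)} = \vect{H}_i^{(l)} + \textit{NodeMLP}(\vect{H}_i^{(l)} \oplus \sum_{j\in\mathcal{N}(i)} \vect{H}_{e_{ij}}^{(l+1)})$ — and identically its self-attention variant — feeds numerically identical arguments into \textit{NodeMLP}, so embeddings stay invariant, while Theorem~\ref{thm:coord-equivar} keeps the coordinates $T$-equivariant so that no later EMP call ever sees geometry that has drifted from a rigid image of the original. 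The step I expect to need the most care is the cross-scale substitution in which the paratope block of $\vect{H}_{intra}^{(t+1)}$ is overwritten by $\vect{H}_{inter}^{(t+1)}$: because both arrays are invariant under the \emph{same} inductive hypothesis the substitution is harmless, but one must phrase the joint induction over the interleaved intra/inter updates precisely rather than treating the two EMP streams as independent. The residue connections added for training stability are coordinate-free, so a final appeal to Lemma~\ref{lemma:edge-MLP} and the node update closes the induction, yielding $\hat{\vect{H}}_i = \hat{\vect{H}}_i\bigl(T(\vect{X})\bigr)$, i.e. the E(3)-invariance of Igformer's final embeddings.
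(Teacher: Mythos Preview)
Your argument is correct and invokes the same ingredient lemmas as the paper, but the organization is different. The paper's proof of Theorem~\ref{thm:embed-inv} is much narrower: it is a short layer-level induction that handles \emph{only} the EMP node update, arguing that if $\vect{H}_i^{(l)}$ is invariant then Lemma~\ref{lemma:geom-invariant} makes $\vect{S}_{ij}^{(l)}$ invariant, Lemma~\ref{lemma:edge-MLP} makes $\vect{H}_{e_{ij}}^{(l+1)}$ invariant, and hence the \textit{NodeMLP} output is invariant. Invariance of APP, SGFormer, and the Triangle/Axial blocks is deferred to a separate Lemma~\ref{lemma:igr-tmm}, and the full pipeline is only assembled later in Theorem~\ref{thm:thm-igformer}. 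You instead run a single pipeline-wide induction that absorbs all three results at once, and in doing so you make explicit two points the paper glosses over: that the $k$-NN graph topology is itself unchanged under $T$ (so $\mathcal{N}(i)$ and $\hat{\vect{P}}$ are stable), and that the cross-scale overwrite of paratope embeddings between the intra and inter streams is harmless because both streams carry the same invariance hypothesis. Your route is more self-contained and arguably more careful; the paper's modular decomposition trades that for reusability, since the separate pieces feed directly into the statement of Theorem~\ref{thm:thm-igformer}.
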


\begin{proof}
    We prove this theorem through induction:
    \textbf{Base Case.}  By Lemma~\ref{lemma:base-embed}, initial embeddings $\vect{H}_i^{(0)}$ are independent of 3D coordinates.  
    
    \textbf{Inductive Step.}  Assume $\{\vect{H}_i^{(l)}\}$ is already invariant under $T$. 
    \begin{itemize}[topsep=0.5mm, partopsep=0pt, itemsep=0pt, leftmargin=10pt]
        \item According to Lemma~\ref{lemma:geom-invariant}, any geometry-based scalar $\vect{S}_{ij}^{(l)}$ remain unchanged under $T$. 
        \item According to Lemma~\ref{lemma:edge-MLP}, edge features $\vect{H}_{e_{ij}}^{(l+1)}$ maintain invariance under $T$.
        \item Consequently, all inputs to the residue-level update MLP/attention remain unchanged under $T$. 
    \end{itemize}

    Therefore, $\vect{H}_i^{(l+1)}$ maintains invariance under $T$. By induction across layers, Igformer's residue embeddings are E(3)-invariant.
\end{proof}

{\bf Proof of Theorem~\ref{thm:thm1-emp}.}
Combining Theorems~\ref{thm:coord-equivar} and \ref{thm:embed-inv}, we show that coordinates in the EMP module are E(3)-equivariant and embeddings are E(3)-invariant, which finishes the proof of Theorem~\ref{thm:thm1-emp}.

\begin{lemma}
\label{lemma:igr-tmm}
    The embeddings updated by inter-graph refinement and triangle multiplicative module (TMM) modules are invariant.
\end{lemma}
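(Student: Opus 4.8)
The plan is to show that every operation in both the APP propagation (Equation~\ref{eq:eq-app}), the SGFormer attention block, and the triangle multiplicative module (including the axial attention and the final extraction in Equations~\ref{eq:eq-paratope}--\ref{eq:eq-paratope-final}) takes as input only quantities that are already E(3)-invariant, and produces outputs that are E(3)-invariant as well. Since invariance is preserved under composition, this suffices. The key structural observation is that none of these three modules ever consumes the raw 3D coordinates $\vect{X}_i$: they operate purely on the residue embeddings $\vect{H}_i$ (and on the fixed graph topology, which itself depends only on coordinate-invariant distances or on embedding-based distances). So the argument is essentially ``invariant in, invariant out,'' chained through each sub-step.

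Concretely, I would proceed in three stages matching the three modules. First, for APP: by Theorem~\ref{thm:embed-inv} the input embeddings $\vect{H}^{(0)}$ coming out of the intra-graph EMP are E(3)-invariant, and $\hat{\vect{P}} = \hat{\mathbf{D}}^{-1/2}\hat{\mathbf{A}}\hat{\mathbf{D}}^{-1/2}$ depends only on the intra-graph adjacency, which by the kNN construction over pairwise atomic distances (Lemma~\ref{lemma:geom-invariant}) is itself unchanged under any $T\in E(3)$. Hence each iterate $\vect{H}^{(j+1)} = (1-\alpha)\hat{\vect{P}}\vect{H}^{(j)} + \alpha\vect{H}^{(0)}$ is a fixed linear combination of invariant quantities, so by induction on $j$ the APP output is invariant. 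Second, for SGFormer: the attention weights $a_{ij}$ are functions of $\vect{H}_i^\top \vect{W}\vect{H}_j$ with $\vect{W}$ a fixed learned matrix, hence invariant; and $\vect{H}_i = \sum_j a_{ij}\vect{H}_j$ is then a convex combination of invariant vectors, hence invariant. Third, for the TMM: the interaction matrix $\vect{Z}_{ij} = \textit{LayerNorm}(\textit{MLP}([\vect{H}_i\oplus\vect{H}_j]))$ is invariant since it is a deterministic function of invariant inputs; the projections $\vect{H}_l = \vect{W}_l\vect{Z}$, $\vect{H}_r = \vect{W}_r\vect{Z}$, the products $\vect{I}_{ij}$, the gating $f_{triangle}^{out}$, the axial-attention output $f_{att}^{out}$ (scaled dot-product attention on invariant $\vect{Q},\vect{K},\vect{V}$ derived from $\vect{H}$), and their sum $f_{out}$ are all compositions of fixed linear maps and pointwise nonlinearities applied to invariant tensors, hence invariant; the recursion $\vect{H}^{(k+1)} = \vect{H}^{(k)} + f_{out}(\vect{H}^{(k)}) + f_{in}(\vect{H}^{(k)})$ preserves invariance by induction on $k$; and finally $\vect{H}_p = \textit{MLP}(\textit{diag}(\vect{H}^{(K)}))$ is invariant.

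I would then note that the inter-graph edge construction ($dist_{ij} = \textit{EdgeMLP}(\vect{H}_i\oplus\vect{H}_j, \vect{H}_j\oplus\vect{H}_i)$ and the subsequent kNN selection) depends only on the invariant embeddings, so the refined inter-graph topology $E_{inter}$ is itself invariant under $T$; this is needed so that downstream modules see a fixed graph. Assembling these three stages, the embeddings produced by the inter-graph refinement pipeline and by the triangle multiplicative module are E(3)-invariant, which is the claim of Lemma~\ref{lemma:igr-tmm}.

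The main obstacle, such as it is, is not technical depth but bookkeeping: one must be careful that at no point does a coordinate-dependent quantity sneak in. The one place that needs a moment's thought is the intra-graph and inter-graph adjacency matrices — the former is built from pairwise atomic distances and the latter from embedding similarities, and both must be argued invariant (the first via Lemma~\ref{lemma:geom-invariant}, the second via the already-established embedding invariance) before one can treat $\hat{\vect{P}}$ and the attention/sum ranges as fixed. Once the graph topology is pinned down as invariant, everything else is a routine induction of the form ``$\vect{H}^{(0)}$ invariant $\Rightarrow$ each update invariant,'' exactly parallel to the proof of Theorem~\ref{thm:embed-inv}.
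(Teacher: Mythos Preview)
Your proposal is correct and follows essentially the same approach as the paper: both argue that APP, SGFormer, and the triangle multiplicative module operate purely on already-invariant embeddings (never on raw coordinates), so ``invariant in, invariant out'' by composition. Your version is considerably more careful than the paper's brief sketch---in particular, you explicitly verify that the intra- and inter-graph adjacency structures are themselves invariant (via Lemma~\ref{lemma:geom-invariant} for the distance-based kNN and via embedding invariance for the refined inter-graph), a point the paper's proof leaves implicit.
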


\begin{proof}
    The APP and SGFormer components update embeddings through weighted aggregation:
    $\vect{H}_{i} = \sum_{j \in V}\vect{H}_j$ with residue connections. These operations are independent of coordinates $\vect{X}_i$. Given that $\vect{H}_i$ is invariant under transformation $T$, the MLP/attention operations preserve E(3)-invariance during embedding updates.

    The Triangle Multiplicative module processes pairwise embeddings $\vect{Z}_{ij} = \textit{MLP}(\vect{H}_i,\vect{H}_j)$ without direct dependence on coordinates $\vect{X}_i$. Since $\vect{H}_i$ maintains invariance under $T$, the pairwise features and subsequent internal operations (row/column gating/attention) on $\vect{Z}$ preserves E(3)-invariance. The merging of $\vect{Z}$ into ${\vect{H}_i}$ through MLP/attention operations maintains this invariance property.
\end{proof}

\subsubsection{Proof of Theorem~\ref{thm:thm-igformer}}

Combining Theorem~\ref{thm:coord-equivar}, Theorem~\ref{thm:embed-inv}, and Lemma~\ref{lemma:igr-tmm}, we conclude:

\begin{theorem}[Igformer is E(3)-Equivariant (Coordinates) and E(3)-Invariant (Embeddings)]
\label{thm:igformer-overall}
Let $\{\vect{H}_i^{(0)},\,\vect{X}_i^{(0)}\}_{i\in V}$ be the initial inputs (node features, coordinates). Suppose Igformer generetes final outputs as follows:
$$
  \bigl\{\vect{H}_i^{(\text{final})},\vect{X}_i^{(\text{final})}\bigr\}
  =
  \textit{Igformer} \Bigl(\bigl\{\vect{H}_i^{(0)},\vect{X}_i^{(0)}\bigr\}\Bigr).
$$
Then, for any $T \in \mathrm{E}(3)$, we have:
$$
  \bigl\{\vect{H}_i^{(\text{final})}, T\bigl(\vect{X}_i^{(\text{final})} \bigr) \bigr\}
  = \textit{Igformer}\Bigl( \bigl \{\vect{H}_i^{(0)}, T\bigl(\vect{X}_i^{(0)} \bigr) \bigr\} \Bigr).
$$
\end{theorem}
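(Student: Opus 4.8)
The plan is to prove the statement by induction over the modules of the Igformer pipeline of Section~\ref{sec:subsec-pipeline}, carrying along the joint invariant that at every stage the residue embeddings are $E(3)$-invariant and the coordinates are $E(3)$-equivariant under the chosen $T$. For the base case I would invoke Lemma~\ref{lemma:base-embed}: since $\vect{H}_i^{(0)}$ is assembled only from residue identity and positional indices, it is literally unchanged when the input coordinates $\vect{X}_i$ are replaced by $T(\vect{X}_i)$, while the initialized coordinates trivially satisfy $T(\vect{X}_i)=T(\vect{X}_i)$. It then remains to check that each of the four building blocks — the EMP modules, the APP/SGFormer inter-graph refinement, the triangle-multiplicative/axial-attention paratope update, and the dual EMP module — preserves this invariant, and to chain these observations through all iterations.

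For the EMP modules (including the two inside the dual EMP step) the inductive step is exactly Theorem~\ref{thm:thm1-emp}, i.e.\ the combination of Theorems~\ref{thm:coord-equivar} and~\ref{thm:embed-inv}: every geometric quantity fed to the internal MLPs is a pairwise distance or dot product, hence $T$-invariant by Lemma~\ref{lemma:geom-invariant}, so all edge and node embeddings stay invariant, while the coordinate update adds invariant scalar multiples of the differences $\Delta\vect{X}_{ij}^{(l)}$, which transform by the rotation $\vect{Q}$ alone (Lemma~\ref{lemma:deltaX}), giving $\vect{Q}\vect{X}_i^{(l+1)}+\vect{b}$. For the dual EMP step I would additionally note two bookkeeping points: the substitution $\vect{H}_{\textit{intra}}^{(t+1)}\leftarrow\vect{H}_{\textit{inter}}^{(t+1)}$ of the paratope block preserves invariance because both sides are invariant by the inductive hypothesis, and the intra- and inter-graph coordinate streams, being maintained separately, each inherit equivariance from their own EMP call. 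For APP, SGFormer, the triangle-multiplicative module, and axial attention, the inductive step is Lemma~\ref{lemma:igr-tmm}: APP iterates $\vect{H}^{(j+1)}=(1-\alpha)\hat{\vect{P}}\vect{H}^{(j)}+\alpha\vect{H}^{(0)}$ with $\hat{\vect{P}}$ depending only on graph topology, and SGFormer, TM and AA are functions of embeddings alone; none of them reads $\vect{X}_i$, so invariance is preserved verbatim, as are the diagonal extraction and MLP producing $\vect{H}_p$ in Equation~\ref{eq:eq-paratope-final}.

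The point that needs real care — and the step I expect to be the main obstacle — is that chaining the per-module lemmas is only legitimate if the two runs of Igformer, on $\vect{X}_i$ and on $T(\vect{X}_i)$, construct identical graphs, i.e.\ identical neighbor sets $\mathcal{N}(i)$ for $G_{ab}$, $G_{ae}$ and $G_{inter}$; otherwise the aggregations $\sum_{j\in\mathcal{N}(i)}$ in EMP and APP would range over different index sets and the module-level lemmas could not be composed. I would discharge this as follows: the intra-graphs are $k$NN graphs built from minimum pairwise atomic distances, which are $T$-invariant by Lemma~\ref{lemma:geom-invariant}, so the ranked neighbor lists coincide; and the inter-graph is a $k$NN graph built from the learned scores $dist_{ij}=\textit{FFN}(\vect{z}_{\textit{in}})+\textit{FFN}(\vect{z}_{\textit{out}})$, which are functions only of embeddings that are invariant by the inductive hypothesis, so those ranked lists coincide as well (ties being broken by the same deterministic rule in both runs). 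With graph identity in hand, the induction closes over all $Ks$ triangle iterations and all dual-EMP iterations, yielding that $\hat{\vect{H}}_i$ is $E(3)$-invariant and $\hat{\vect{X}}_i$ is $E(3)$-equivariant; the sequence head $s_i=\textit{softmax}(\textit{MLP}(\vect{H}_i))$ then reads only invariant embeddings, and the structure head reads the coordinates out of the final dual-EMP call, which is equivariant, completing the argument.
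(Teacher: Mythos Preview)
Your proposal is correct and follows essentially the same route as the paper: induct over the pipeline, invoking Theorems~\ref{thm:coord-equivar} and~\ref{thm:embed-inv} for the EMP blocks and Lemma~\ref{lemma:igr-tmm} for the embedding-only modules, then chain across layers. You are in fact more careful than the paper's own proof, which does not explicitly verify that the intra- and inter-graphs built in the two runs coincide; your argument that the $k$NN neighbor sets are determined by $T$-invariant quantities (atomic distances for the intra-graphs, invariant embeddings for the inter-graph) fills a real gap the paper glosses over.
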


\begin{proof}
    The proof follows by chaining layer-wise properties:  
    \begin{itemize}[topsep=0.5mm, partopsep=0pt, itemsep=0pt, leftmargin=10pt]
        \item By Theorem~\ref{thm:coord-equivar}, the coordinate update at each layer maintains E(3)-equivariance. Through induction across layers, this property extends to the entire coordinate transformations.
        \item Similarly, Theorem~\ref{thm:embed-inv} and Lemma~\ref{lemma:igr-tmm} establish that residue feature update at each layer preserves E(3)-invariance. By induction, this invariance property carries through to the final residue embeddings. 
    \end{itemize}
    Therefore, the complete Igformer model satisfies both equivariance and invariance properties as stated in the theorem.
\end{proof}

\subsection{Algorithms}
\label{appendix-sec:pipeline}
The training process iterates three times per batch, with each iteration updating the epitope coordinates and embeddings, followed by the reconstruction of intra-graph and inter-graph connections. The final iteration is utilized for loss function calculation.
Algorithm \ref{alg:iterative_update} shows the pseudo-code of the iterative updating process in Igformer. Detailed descriptions from Lines 3-8 are provided in Sections \ref{sec:subsec-inter-graph}-\ref{sec:subsec-update}.
In the first iteration ($T=1$), we use the initialized embeddings for calculation.
For subsequent iterations ($T\in \{2,3\}$), both embeddings and coordinates are updated accordingly. 
(i) The full embedding update is given by $\vect{H}_{full} = \textit{MLP}(\vect{H}_{intra}) + \vect{H}_{full}^{'}$, where $\vect{H}_{full}^{'} = \{ \vect{H}_{ae}, \vect{H}_{ab}^{'} \}$.
The paratope region is specifically updated using attention weights: $\vect{H}_{ab}^{'} = \{\vect{P} \odot \vect{H}_{\textit{p}}, \vect{H}_{ab\setminus\textit{p}}\}$, where $\vect{P} = \textit{MLP}(\vect{H}_{\textit{p}})$ assigns different weights to different residues through element-wise multiplication $\odot$, while maintaining embeddings of non-paratope regions $\vect{H}_{ab\setminus\textit{p}}$. 
(ii) The coordinate update follows a similar strategy, replacing the original epitope coordinates:
$\vect{X}_{full} = \{\vect{X}_{ae}^{'}, \vect{X}_{ab}\}$, ensuring consistent transformation patterns between embedding and coordinate spaces.

\begin{algorithm}[t]
\small
\caption{Iterative update for antibody-antigen binding interface}
\label{alg:iterative_update}
\begin{algorithmic}[1]
\renewcommand{\algorithmicrequire}{\textbf{Input:}}
\renewcommand{\algorithmicensure}{\textbf{Output:}}
\REQUIRE{Initial embeddings, coordinates, number of iterations $T$, number of layers $t$}
\ENSURE{Updated embeddings and coordinates}
\FOR{$T = 1$ to $3$}
    \STATE{Construct intra-graph and inter-graph according to Appendix~\ref{appendix-sec:graph-construct} using embeddings and coordinates calculated}
    \FOR{$t = 1$ to $3$}
    \IF{$t = 1$}
    \STATE{Triangle multiplicative update for paratope}
    \ENDIF
    \STATE{Dual EMP update}
   \ENDFOR
    \STATE{Update the epitope coordinates and embeddings}
    \IF{$T = 3$}
        \STATE{Calculate training loss}
    \ENDIF
\ENDFOR
\end{algorithmic}
\end{algorithm}

\subsection{Loss Function}
\label{appendix-sec:loss}

Our training objective combines multiple loss terms to ensure accurate prediction of both sequence and structure. 

{\bf Sequence Loss.} The Sequence Negative Log Likelihood (SNLL) loss evaluates sequence prediction accuracy for masked positions in CDR regions:
\begin{equation}
\label{eq:loss-seq}
\begin{aligned}
    \mathcal{L}_{\textit{seq}} &= -\frac{1}{|V_p| \cdot | \mathcal{R}|}\sum_{v_i \in V_p } \sum_{r \in \mathcal{R}} y_{i,r} \log(\textit{softmax}(\textit{MLP}(\vect{Z}_i))_r),
\end{aligned}
\end{equation}
where $V_p$ is the set of masked residues in epitope-binding CDRs, $\vect{Z}_i$ is the output embedding of residue $v_i$, $\mathcal{R}$ is the set of all possible residue types,
$y_{i,r}$ indicates the ground truth residue type through one-hot encoding. $\textit{MLP}(\vect{Z}_i)$ outputs a vector of logits for each residue type and $\textit{softmax}(\textit{MLP}(\vect{Z}_i))_r$ gives the predicted probability for each residue type $r \in \mathcal{R}$.

{\bf Structural Loss.}
For structural accuracy, we employ a coordinate loss computed over the full antibody structure $\vect{X}_{ab}$:
\begin{equation*}
    \mathcal{L}_{\textit{coord}} = \frac{1}{|V_{ab}|} \sum_{v_i \in V_{ab}} \ell_{\textit{huber}}(\vect{X}_i^{\textit{pred}} - \vect{X}_i^{\textit{true}}T)
\end{equation*}
using the Huber loss function:
\begin{equation*}
    \ell_{\textit{huber}}(x) = 
    \begin{cases}
        0.5x^2, & \textit{if } |x| < \delta \\
        \delta(|x| - 0.5\delta), & \textit{otherwise}
    \end{cases}.
\end{equation*}
where $T$ represents the optimal rigid transformation obtained through Kabsch alignment algorithm \cite{kabsch1976rotation} using backbone atoms (N, C$\alpha$, C, O) of each residue, 
$\vect{X}_i^{\textit{pred}}$ and $\vect{X}_i^{\textit{true}}$ are the predicted and ground-truth 3D coordinates for residue $v_i$,
$\delta=1$ is the threshold parameter.

To maintain chemical validity, we include a backbone bond loss:
\begin{equation*}
    \mathcal{L}_{\textit{bond}} = \frac{1}{|\mathcal{A}|} \sum_{b \in \mathcal{A}} \ell_{\textit{huber}}(b^{\textit{pred}} - b^\textit{true})
\end{equation*}
where $\mathcal{A}$ contains all atoms in the antibody,
$b^{\textit{pred}}$ and $b^\textit{true}$ are the predicted and ground-truth bond length derived from $\vect{X}^{\textit{pred}}$
and $\vect{X}^{\textit{true}}$, respectively.
The final structure loss is the combination of the coordinate loss and the backbone bond loss:
\begin{equation}
\label{eq:loss-struct}
    \mathcal{L}_{\textit{struct}} = \mathcal{L}_{\textit{coord}} + \mathcal{L}_{\textit{bond}}
\end{equation}

{\bf Interface Loss.}
The interface loss evaluates the quality of predicted interactions between antibody paratope and antigen epitope regions using $\vect{X}_{inter}$. This loss comprises two components: a structural accuracy term and an edge distance prediction term. The structural accuracy loss measures the deviation of predicted atomic coordinates from their true positions in the paratope region:
\begin{equation*}
    \mathcal{L}_{\textit{struct}} = \frac{1}{|\mathcal{P}|} \sum_{i \in \mathcal{P}} \ell_{\textit{huber}}(\vect{X}_k^{\textit{pred}} - \vect{X}_k^{\textit{true}}),
\end{equation*}
where $\mathcal{P}$ denotes the set of atoms in the antibody paratope region, and $\vect{X}_{k}^{\textit{pred}}$ and $\vect{X}_{k}^{\textit{true}}$ denote the predicted and true 3D coordinates for atom $k$, respectively.
The edge distance prediction loss assesses the accuracy of predicted residue interactions within the binding interface:
\begin{equation*}
    \mathcal{L}_{\textit{edge}} = \frac{1}{|E_{\textit{inter}}|} \sum_{(v_i,v_j) \in E_{\textit{inter}}} \ell_{\textit{huber}}(f_{\theta}(\vect{h}_i, \vect{h}j) - dist_{ij}^{*}),
\end{equation*}
where $E_{\textit{inter}}$ is the set of epitope-paratope residue pairs at the interface with $v_i \in V_{ae}$ from the epitope and $v_j \in V_p$ from the paratope. The network $f_{\theta}$ predicts inter-residue distances based on residue embeddings $\vect{H}_i$ and $\vect{H}_j$ as input, $dist_{ij}^{*}$ is computed directly from coordinates as the minimum distance between atoms, $\vect{h}_i$ and $\vect{h}_j$, which are compared against true distances $d{ij}^*$ computed from atomic coordinates.

The total interface loss combines these components:
\begin{equation}
\label{eq:loss-interface}
    \mathcal{L}_{\textit{interface}} = \mathcal{L}_{\textit{paratope}} + \mathcal{L}_{\textit{edge}}
\end{equation}

{\bf Final Loss Function.}
By combining Equations \ref{eq:loss-seq}-\ref{eq:loss-interface}, the final loss function is calculating following:
\begin{equation}
  \mathcal{L}_{\textit{total}} = \mathcal{L}_{\textit{seq}} + \mathcal{L}_{\textit{struct}} + \mathcal{L}_{\textit{interface}}.
\end{equation}
This comprehensive loss function ensures accurate prediction of both sequence and structure while maintaining chemical validity and interface quality.

\section{Additional Experimental Settings}

\subsection{Datasets}
\label{appendix-sec:dataset}
Our antibody dataset preprocessing pipeline leverages the Structural Antibody Database (SAbDab) snapshot from November 2022, with structures pre-numbered using the IMGT numbering scheme \cite{lefranc2003imgt}. The IMGT system provides precise residue numbering for antibody structures, defining specific ranges for framework regions (FR) and complementarity determining regions (CDRs) in both heavy and light chains. For the heavy chain, these ranges encompass FR-H1 (1-26), CDR-H1 (27-38), FR-H2 (39-55), CDR-H2 (56-65), FR-H3 (66-104), CDR-H3 (105-117), and FR-H4 (118-129), with parallel definitions for the light chain regions.
This standardized numbering scheme is crucial for consistent processing and analysis across all antibody structures.

We implement stringent filtering criteria, focusing exclusively on protein and peptide antigens while ensuring no chain overlap between antibody and antigen components. Other potential antigen types like small molecules, nucleic acids, or carbohydrates are excluded from our dataset. The validation process enforces three critical requirements: structural completeness, with verification of conserved residues (CYS23, CYS104, TRP41) in both chains; correct IMGT numbering across all regions; and comprehensive atomic coordinate validation, including backbone and sidechain atoms, with a 6.6 \AA\ contact distance threshold for residue interactions.

The final dataset preprocessing stage involves sequence-based clustering using MMseqs2 with a 40\% identity threshold, particularly focusing on CDR-H3 sequence similarity. This process yields 3,246 training and 365 validation antibodies. The curated SAbDab dataset serves as the foundation for both the RAbD \cite{dunbar2013sabdab} (60 PDB structures) and the IgFold \cite{ruffolo2023IgFold} benchmark dataset (51 structures), ensuring consistent quality and standardization across all experimental evaluations.

\subsection{Implementation Details}
\label{appendix-sec:exp-setting}
We implement Igformer using PyTorch \cite{paszke2019pytorch} and train the model on a single GeForce RTX 3090 Ti GPU using the Adam optimizer \cite{kingma2015adam}. The model training process utilizes a batch size of 16, running for 150 epochs in task 1 and 300 epochs for tasks 2-4, while preserving the 10 best-performing model checkpoints throughout the training phase of each individual task. The model architecture employs a 64-dimensional embedding space for antibody feature representation, 14 channels (i.e., 14 atoms) for each residue, and 3 iterations. The graph structure connects each residue to its 9 nearest neighbors (k = 9). The model incorporates a dropout rate of 0.1 (dropout = 0.1) to enhance generalization. 
Training proceeds through 3 iterations per batch, with each iteration updating epitope coordinates and embeddings, followed by reconstruction of intra-graph and inter-graph connections. The final iteration is used for loss calculation.
For structure analysis, the model defines interacting residues in the antibody-antigen interface using a binding distance threshold of 6.6 \AA. 

The optimization process employs gradient clipping at 1.0 and implements an adaptive learning rate schedule. The learning rate decays exponentially from 0.0013 to 0.0001 across the training epochs, with the decay factor determined as $ln(0.0001/0.0013)$, divided by the total number of training steps, ensuring a smooth convergence to the target learning rate.

\textbf{Feature Initiation.}
Igformer employs two parallel embedding components to construct the initial residue representation. 
First, the residue type embedding transforms each amino acid into a 64-dimensional vector using a learned embedding matrix of size $25\times 64$, where 25 accounts for the 20 types of amino acids and special tokens. Each amino acid is mapped to a unique embedding vector $\vect{H}_i^{\textit{res}}$ that captures its chemical and physical properties.
The positional information is encoded through a separate learned embedding matrix of size $192 \times 64$, which maps each position in the sequence to a 64-dimensional vector $\vect{H}_i^{\textit{pos}}$. This enables the model to maintain sequential context for sequences up to 192 residues in length, allowing differentiation between identical amino acids at different positions.
The final feature representation $\vect{H}_i$ is computed through element-wise addition of these two embeddings. This additive combination maintains the 64-dimensional feature space while integrating both residue identity and positional context, enabling the model to learn position-aware representations with comprehensive chemical information.

\textbf{EMP Module.}
The EMP module iterates three times (3 blocks) and consists of several key components: feature initialization, feature similarity network, EdgeMLP, CoordMLP and NodeMLP with residual connections, as detailed in Table \ref{tab:igformer_layers}. 
Each layer incorporates dropout \cite{srivastava2014dropout} mechanisms after activation functions. The similarity features are normalized after being mapped from 196 to 128 dimensions, including $sim_{ij}^{res}$ and $sim_{ij}^{atom}$. 
The iterative application of these components ensures efficient updates of both node features and coordinate transformations while maintaining stable training dynamics through the dropout mechanism.
The EdgeMLP processes concatenated input features of dimension 384, comprising source\_node features (128), target\_node features (128), and similarity features (128). This design enables the model to effectively capture interactions between residues while preserving both structural and relational information crucial for antibody representation.

\begin{table}[t]
    \centering
    \small
    \caption{Structure of the EMP module.}
    \label{tab:igformer_layers}
    \begin{tabular}{l|l}
        \hline \hline
        \textbf{Component} & \textbf{Layer Structure} \\
        \hline
        Initialization Network & Linear(64 $\rightarrow$ 128), Dropout(0.1) \\
        \hline
        Radial Feature Network & Linear(196 $\rightarrow$  128), Post-normalization \\
        \hline
        \multirow{3}{*}{EdgeMLP} & Input: source\_features[128] $\oplus$ target\_features[128] $\oplus$ radial\_features[128] \\
                         & Linear(384 $\rightarrow$ 128), SiLU(), Dropout(0.1) \\
                         & Linear(128 $\rightarrow$ 128), SiLU(), Dropout(0.1) \\
        \hline
        \multirow{2}{*}{CoordMLP}         & Linear(128 $\rightarrow$ 128), SiLU(), Dropout(0.1) \\
                         & Linear(128 $\rightarrow$ 3) \\
        \hline
        \multirow{2}{*}{NodeMLP}          & Linear(64 $\rightarrow$ 128), SiLU(), Dropout(0.1) \\
                         & Linear(128 $\rightarrow$ 128), SiLU(), Dropout(0.1) \\
        \hline \hline
    \end{tabular}
\vspace{-4mm}
\end{table}

\begin{table}[t]
    \centering
    \small
    \caption{Structure of the inter-grpah refinement module.}
    \label{tab:additional_layers}
    \begin{tabular}{l|l}
        \hline \hline
        \textbf{Component} & \textbf{Layer Structure} \\
        \hline
        APP & Propagation with residue connection (128 $\rightarrow$ 128) \\
        \hline
        \multirow{3}{*}{SGFormer Layer (5 blocks)} & Input: Linear(128 $\rightarrow$ 64) \\
                             & Query: Linear(64 $\rightarrow$ 64), Key: Linear(64 $\rightarrow$ 64), Value: Linear(64 $\rightarrow$ 64) \\
                             & LayerNorm, Residual ($\alpha = 0.9$), Dropout(0.5) \\
        \hline
        Output & Linear(64 $\rightarrow$ 128) \\
        \hline \hline
    \end{tabular}
\vspace{-4mm}
\end{table}

\begin{table}[t]
    \centering
    \small
    \caption{Structure of the triangle module.}
    \label{tab:feature_pair_layers}
    \begin{tabular}{l|l}
        \hline \hline
        \textbf{Component} & \textbf{Layer Structure} \\
        \hline
        \multirow{2}{*}{Feature Pair MLP} & Input: features\_i[128] $\oplus$ features\_j[128] \\
                         & Linear(256 $\rightarrow$ 256), SiLU(), Dropout(0.1) \\
        \hline
        \multirow{2}{*}{Triangle Multiply (outgoing/incoming)} & Input: Matrix($n \times n \times 256$) \\
                                            & Linear(256 $\rightarrow$ 256), SiLU(), Linear(256 $\rightarrow$ 256) \\
        \hline
        \multirow{3}{*}{Triangle Attention (outgoing/incoming)} & Input: Matrix($n \times n \times 256$) \\
                                             & Query: Linear(256 $\rightarrow$ 256), Key: Linear(256 $\rightarrow$ 256), Value: Linear(256 $\rightarrow$ 256) \\
                                             & LayerNorm() \\
        \hline
        \multirow{2}{*}{Dimension Reduction} & Input: Diagonal($n \times 256$) \\
                            & Linear(256 $\rightarrow$ 128), LayerNorm() \\
        \hline \hline
    \end{tabular}
\vspace{-4mm}
\end{table}

\textbf{APP and SGFormer.}
As shown in Table \ref{tab:additional_layers}, the inter-grpah refinment network consists of two main components. First, the APP module conducts 16 steps of propagation with $\alpha=0.1$, , allocating 90\% weight to neighbor-propagated information while retaining 10\% of original residue features. 
This enables effective information spread across the graph while preserving essential original residue characteristics.
The SGFormer component employs 5 single-head attention layers, computing attention once per layer rather than utilizing multiple attention heads. In its residual connections after each transformer layer, it uses $\alpha=0.9$, emphasizing transformed features (90\%) while maintaining minimal previous layer information (10\%).
Layer normalization is applied to the 64-dimensional feature space after each attention computation, working in conjunction with dropout (0.5) to regulate feature magnitudes and ensure training stability.

\textbf{Triangle Module.}
The triangle network consists of two main components, as detailed in Table \ref{tab:feature_pair_layers}. First, the triangle multiplication module performs feature pair-wise propagation with directed connections.
For each sequence of $n$ residues, it constructs an $n\times n$ matrix by concatenating paired residue features (128-dim) and projecting them to 256-dimensional space via an MLP, establishing directed connections where $\vect{Z}_{ij}$ $\neq$ $\vect{Z}_{ji}$.
The triangle attention component uses four sequential operations: multiply outgoing, multiply incoming, attention outgoing, and attention incoming. Each operation maintains the $n \times n \times 256$ dimensionality while enabling distinct information flows. The multiply operations use MLPs for feature transformation while preserving the triangle structure, whereas the attention operations employ single-head self-attention with query/key/value projections to capture long-range dependencies. The final step extracts diagonal elements and reduces them to 128 dimensions for network compatibility.
The implementation incorporates a dropout rate of 0.1 and SiLU activation functions across all linear projections. Layer normalization is applied after attention operations, and an additional normalization step in the final dimension reduction. The entire module operates on a per-sequence basis, processing each unique sequence independently before the features are reassembled.

\textbf{Dual EMP Module.}
The Dual EMP architecture extends the base EMP module by operating on two distinct graph structures simultaneously. It processes the intra-graph and inter-graph representations, along with their respective coordinates ($\vect{X}_{full}$ and $\vect{X}_{inter}$), while maintaining shared embeddings between them. A detailed description of this architecture is provided in Section~\ref{sec:subsec-update}.

\textbf{Embedding Distance.}
As detailed in Table \ref{tab:edge_distance_prediction_mlp}, The edge distance MLP computes pairwise distances between residue embeddings. First, it concatenates
pairs of 128-dimensional residue embeddings into 256-dimensional vectors. These concatenated features are processed through a two-layer MLP with SiLU activations. The final linear projection produces a scalar value representing the predicted pairwise distance or interaction strength.

\begin{table}[t]
    \centering
    \small
    \caption{Structure of the edge distance MLP and prediction MLP.}
    \label{tab:edge_distance_prediction_mlp}
    \begin{tabular}{l|l}
        \hline  \hline
        \textbf{Component} & \textbf{Layer Structure} \\
        \hline
        \multirow{2}{*}{Edge Distance MLP} & Input: residue\_i[128] $\oplus$ residue\_j[128]) \\
                          & SiLU(), Linear(256 $\rightarrow$ 128), SiLU(), Linear(128 $\rightarrow$ 1) \\
        \hline
        \multirow{2}{*}{Prediction MLP} & Input: hidden features[128] \\
                      & SiLU(), Linear(128 $\rightarrow$ 128), SiLU(), Linear(128 $\rightarrow$ 20) \\
        \hline \hline
    \end{tabular}
\vspace{-4mm}
\end{table}

\textbf{Residue Type Prediction.}
As shown in Table \ref{tab:edge_distance_prediction_mlp}, the residue type prediction MLP converts the learned 128-dimensional residue representations into probability distributions over 20 amino acid classes through a two-layer MLP with skip connections. The architecture processes features sequentially: an initial SiLU activation and linear projection preserve the 128-dimensional space, followed by a second SiLU activation and final linear transformation that maps to 20-dimensional amino acid class probabilities.

\subsection{Baselines}
\label{appendix-sec:baselines}
For the baselines, we adopt the hyperparameters and training procedures from their official releases since all methods utilize SAbDab to construct training sets of similar scale and distribution. We save model parameters from the top 10 validation rounds, compute metrics for each model independently, and report the mean values as final results. State-of-the-art models on each task are included as competitors. 

In Task 1, we evaluate against RosettaAb, an energy-based method using statistical functions \cite{adolf2018rabd}; DiffAb, a diffusion-based generative model \cite{luo2022diffab}; MEAN, an equivariant attention network \cite{kong2023mean}; HERN, an end-to-end framework limited to CDR-H3 \cite{jin2022hern}; and dyMEAN, an adaptive multi-channel equivariant network \cite{kong2023dymean}.

For Task 4, we compare Igformer against IgFold$\Rightarrow$HDock, IgFold$\Rightarrow$HERN, GT$\Rightarrow$HERN, and dyMEAN, each employing different docking strategies. IgFold$\Rightarrow$HDock follows a two-stage pipeline where IgFold first predicts the antibody structure \cite{ruffolo2023IgFold}, and HDock \cite{yan2020hdock}, using knowledge-based scoring functions, performs docking. IgFold is a specialized version of AlphaFold tailored for antibody structure prediction, while HDock applies traditional docking approaches. In contrast, IgFold$\Rightarrow$HERN adopts a similar two-stage approach but replaces HDock with HERN \cite{jin2022hern}, which takes the IgFold-predicted backbone structure as input and generates docked backbones, followed by Rosetta for side-chain packing \cite{adolf2018rabd}. GT$\Rightarrow$HERN further extends this pipeline by using ground truth antibody structures instead of predicted ones, allowing HERN to dock CDR-H3 along with other regions toward the epitope, thus providing an upper bound on performance by leveraging perfect structural information. Unlike these multi-stage pipelines, dyMEAN distinguishes itself as a fully end-to-end approach that directly models antigen-antibody interactions without requiring separate structure prediction and docking stages \cite{kong2023dymean}.

We adopt hyperparameters and training procedures from their official releases, as all methods utilize SAbDab for training. Specifically, HERN by Jin et al. \cite{jin2022hern} uses a larger hidden size of 256 with four layers and 16 RBF kernels for distance embedding. DiffAb by Luo et al. \cite{luo2022diffab} employs a hidden size of 128, a pair feature size of 64, six layers, and 100 diffusion steps. MEAN and dyMEAN share similar parameters, including an embedding size of 64, a hidden size of 128, three layers, three iterations, and nine nearest neighbors. dyMEAN introduces an additional parameter, d = 16, for atom type and position embedding. All baseline models use nine neighbors for KNN graph construction.

\subsection{Metrics}
\label{appendix-sec:metrics}
We employ a comprehensive set of metrics to evaluate model performance in both antibody 1D sequence and 3D structure prediction tasks, encompassing both structural accuracy and sequence recovery:
\begin{itemize}[topsep=0.5mm, partopsep=0pt, itemsep=0pt, leftmargin=10pt]
    \item DockQ \cite{basu2016dockq}: This metric evaluates antibody-antigen interaction quality, especially within the paratope-epitope binding region. While the model optimizes paratope-epitope interactions within a specific cutoff distance during training, its evaluation considers interactions with the complete antigen structure, ensuring a comprehensive assessment of binding interface prediction.
    \item Root Mean Square Deviation (RMSD): This metric measures structural accuracy by calculating unaligned distances between predicted and actual CA atoms in CDR regions, providing direct assessment of local structural precision.
    \item TM-score \cite{zhang2005tmalign}: This metric assesses global structural similarity between predicted and reference antibody structures, evaluating the overall quality of structure prediction across the entire antibody.
    \item Local Distance Difference Test (lDDT) \cite{mariani2013lddt}: This metric provides an atomic-level assessment of local structural accuracy by comparing predicted and actual atomic positions across the entire antibody, offering detailed insights into structural fidelity.
    \item Amino Acid Recovery (AAR): This metric calculates the fraction of correctly predicted amino acids across the entire sequence, measuring overall sequence prediction accuracy.
    \item Contact AAR \cite{ramaraj2012caar}: This metric specifically evaluates prediction accuracy for residues in direct contact with the epitope, defined by a 6.6 \AA\ distance threshold, focusing on binding interface accuracy.
\end{itemize}

\section{Additional Experimental Results}
\label{appendix-sec:exp-results}

\begin{table}[t]
\centering
\small
\caption{\textit{Comparison of different paratopes.}}
\label{tab:additional-paratope}
\begin{tabular}{lcccc}
\hline \hline
\multicolumn{5}{c}{\bf Task 1} \\
\hline
Paratope & TMscore$\uparrow$ & lDDT$\uparrow$ & RMSD$\downarrow$ & DockQ$\uparrow$ \\
\hline
H3 & {\bf 0.9757} & {\bf0.8650} & {\bf 7.15} & {\bf 0.45} \\
H3 + L3 & 0.9734 & 0.8512 & 8.05 & 0.41 \\
\hline
\multicolumn{5}{c}{\bf Task 2} \\
\hline
H3 &   0.9706 & 0.8195 & - & 0.4255 \\
H3 + L3 & {\bf 0.9750} & {\bf 0.8311} & - & {\bf 0.4817} \\
\hline
\multicolumn{5}{c}{\bf Task 3} \\
\hline
H3 & {\bf 0.9681} & {\bf 0.7580} & - & {\bf 0.4600} \\
H3 + L3 & 0.9667 & 0.7490 & - & 0.4367 \\
\hline
\multicolumn{5}{c}{\bf Task 4} \\
\hline
H3 & {\bf 0.9730} & {\bf 0.8677} & {\bf 7.88} & {\bf 0.522} \\
H3 + L3 & 0.9710 & 0.8631 & 8.25 & 0.4931 \\
\bottomrule
\end{tabular}
\vspace{-4mm}
\end{table}

\subsection{Additional Ablation Study}
\label{appendix-sec:subsec-ablation}
{\bf Choice of Paratope.}
We investigate the impact of paratope selection by comparing models using CDR-H3 alone versus both CDR-H3 and CDR-L3 regions. As shown in Table \ref{tab:additional-paratope}, the choice of paratope influences model performance across different tasks.
For Task 1, using only CDR-H3 achieves superior performance with a TMscore of 0.9757, lDDT of 0.8650, RMSD of 7.15, and DockQ score of 0.45, compared to 0.9734, 0.8512, 8.05, and 0.41 respectively when including both CDR-H3 and L3. This trend continues in Task 3, where CDR-H3 alone yields better TMscore (0.9681 vs 0.9667), lDDT (0.7580 vs 0.7490), and DockQ (0.4600 vs 0.4367). Task 4 follows a similar pattern with improvements across all metrics when using only CDR-H3. 
Interestingly, Task 2 exhibits different characteristics, where including both CDR-H3 and CDR-L3 leads to improved performance, with TMscore increasing from 0.9706 to 0.9750, lDDT from 0.8195 to 0.8311, and DockQ from 0.4255 to 0.4817. These results underscore the importance of task-specific paratope selection for optimal performance.

{\bf Ablation Study on Task 4.}
We conduct additional experiments to evaluate the individual contributions of key components in Igformer for complex structure prediction. Table~\ref{tab:ablation-appendix} presents the performance impact of removing each architectural component. Our observations are as follows. Both the triangle multiplicative module and axial attention prove critical to model performance. The removal of TM leads to the most significant degradation, with RMSD increasing to 8.12 and DockQ dropping to 0.447, while removing AA results in an RMSD of 8.06 and DockQ of 0.453. Additionally, replacing the dual EMP architecture with a single message passing framework diminishes performance, with RMSD rising to 7.99 and DockQ decreasing to 0.452. These systematic evaluations further validate the essential contribution of each component in Igformer.

\begin{table}[t]
\vspace{-2mm}
\caption{Ablation study on complex structure prediction.}
\small
\centering
\scalebox{1.1}{
\setlength\tabcolsep{2pt}
\label{tab:ablation-appendix}
\begin{tabular}{lcc|cc}
\hline \hline
\multirow{2}{*}{Model} & \multicolumn{2}{c}{Generation} & \multicolumn{2}{c}{Docking} \\
\cline{2-5}
 & TMscore$\uparrow$ & lDDT$\uparrow$ & RMSD$\downarrow$ & DockQ$\uparrow$ \\
\hline
Igformer & \textbf{0.9730} & \textbf{0.8677} & \textbf{7.88} & \textbf{0.522} \\
- APP & 0.9718 & 0.8603 & 7.95 & 0.449 \\
- SGFormer & 0.9725 & 0.8614 & 7.92 & 0.453 \\
- TM & 0.9709 & 0.8543 & 8.12 & 0.447 \\
- AA & 0.9708 & 0.8565 & 8.06 & 0.453 \\
- Dual EMP & 0.9725 & 0.8611 & 7.99 & 0.452 \\
\hline \hline
\end{tabular}
}
\vspace{-4mm}
\end{table}

\subsection{Hyperparameter analysis}
\label{appendix-sec:subsec-hyperparameter}

As illustrated in Figure \ref{fig:parameters}, we analyze the impact of learnable weight factor $w$ (refer to Section~\ref{sec:subsec-emp}) on similarity computation across Tasks 1-4. 
The DockQ, as a primary performance metric, consistently achieves its peak at $w = 0.2$ across all tasks. This optimal value indicates that atom-level similarity plays a predominant role in model performance.
While both residue-level similarity (parameterized by $w$) and atom-level similarity contribute to modeling structural interactions, the optimal performance at $w = 0.2$ suggests that emphasizing atomic information (0.8) over residue-level features (0.2) most effectively captures antibody-antigen interactions. This finding provides crucial guidance for optimizing similarity matrix parameters in antibody sequence design and structure prediction tasks.

\begin{figure}[t]
    \centering
    \includegraphics[height=55mm]{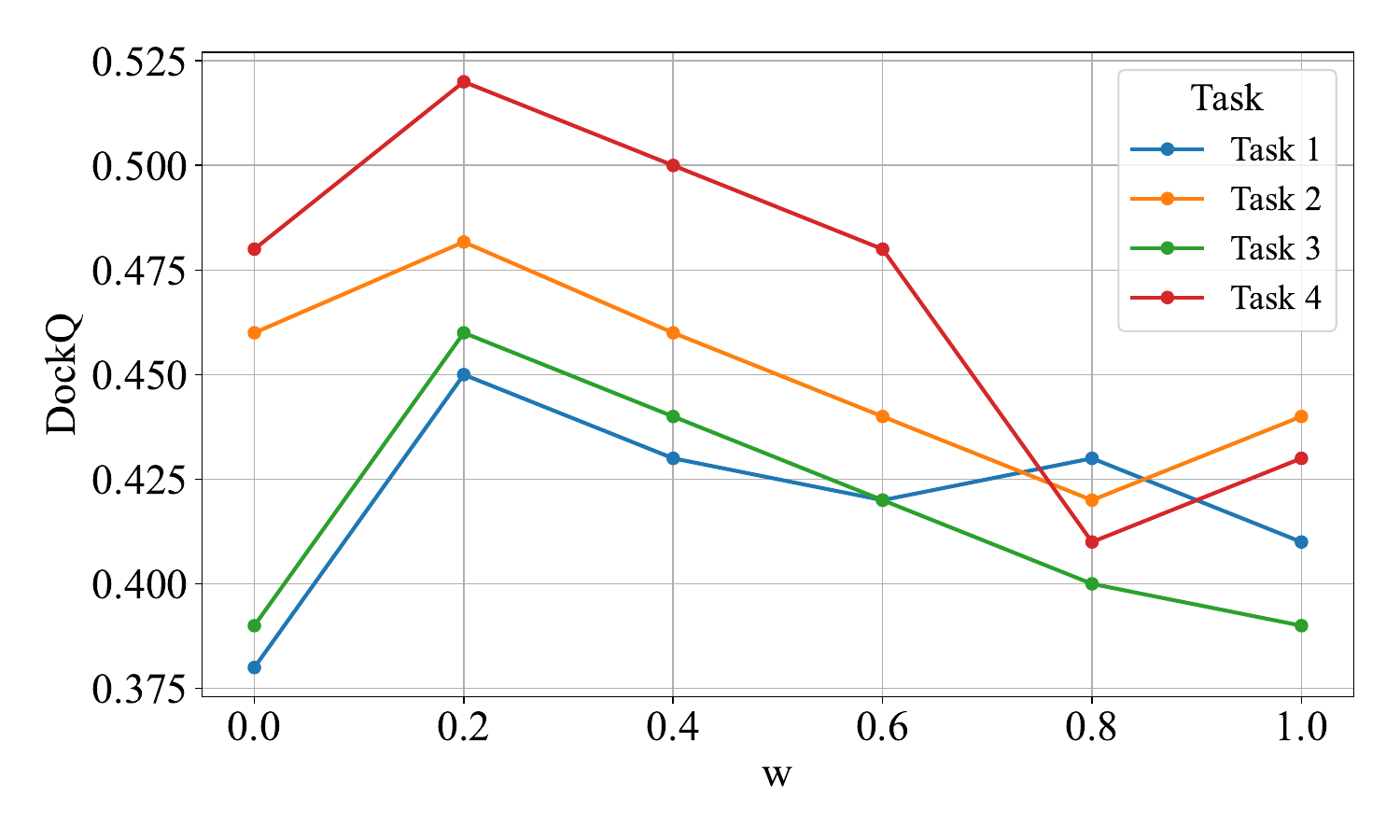}
    \vspace{-6mm}
    \caption{DockQ scores for varying $w$ values across four tasks, with the peak observed at $w=0.2$, emphasizing the importance of geometric distance in similairty matrix computation.}
    \vspace{-4mm}
    \label{fig:parameters}
\end{figure}

\subsection{Additional Case Study of Task 1}
\label{appendix-sec:subsec-case-study}
Figure~\ref{fig:fig6-appendix-case} illustrates more antibody structures generated by Igformer in Task 1.

\begin{figure}[t]
\centering
%\vspace{2mm}
  \begin{small}
    \begin{tabular}{ccc}
        \includegraphics[height=50mm]{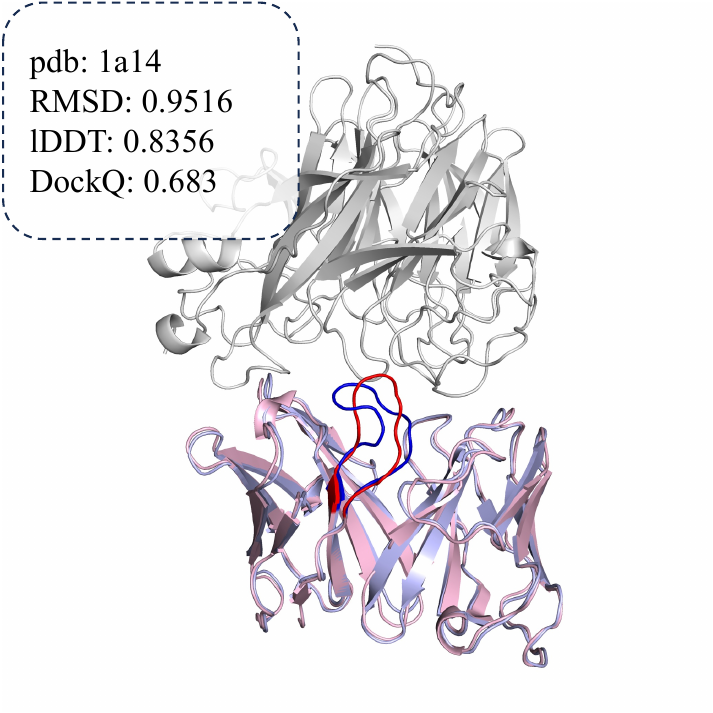} &
        \hspace{-3mm}
        \includegraphics[height=50mm]{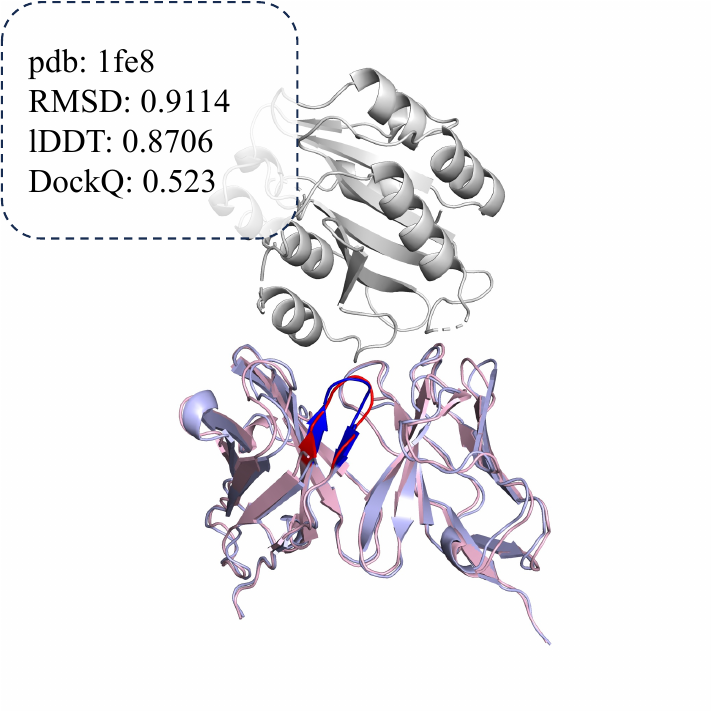} &
        \hspace{-3mm}
        \includegraphics[height=50mm]{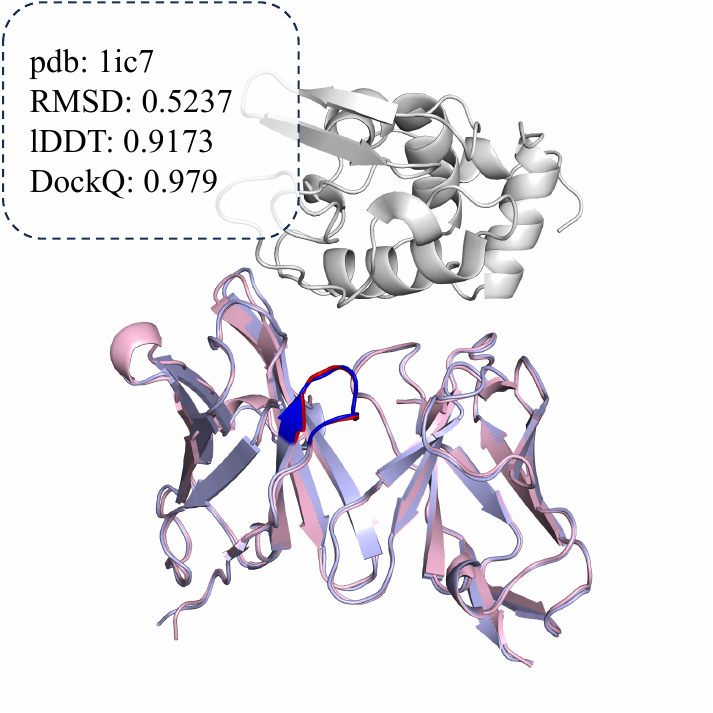} \\
        \includegraphics[height=50mm]{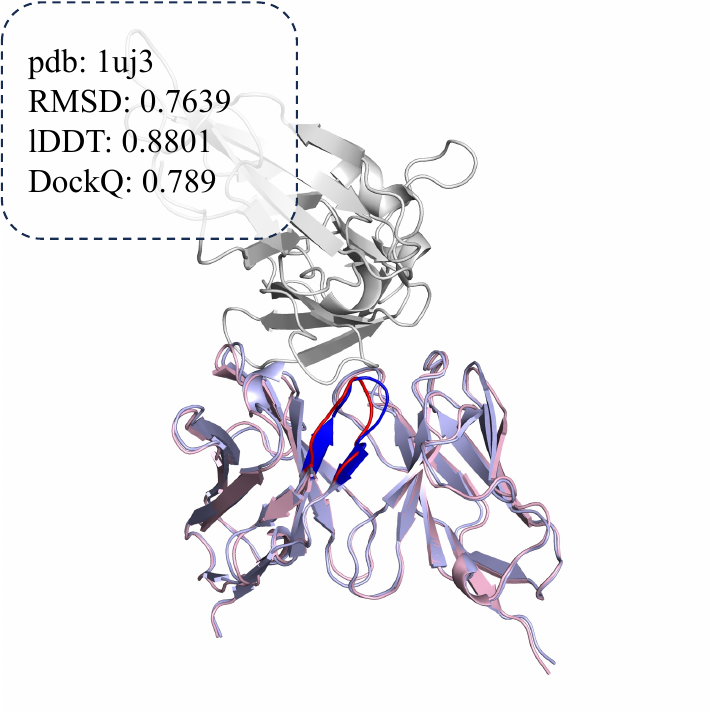} &
        \hspace{-3mm}
        \includegraphics[height=50mm]{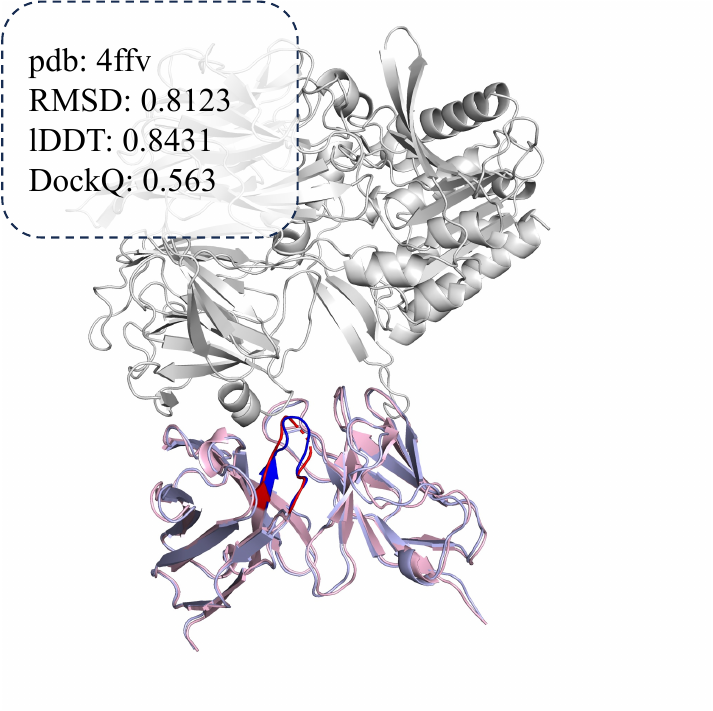} &
        \hspace{-3mm}
        \includegraphics[height=50mm]{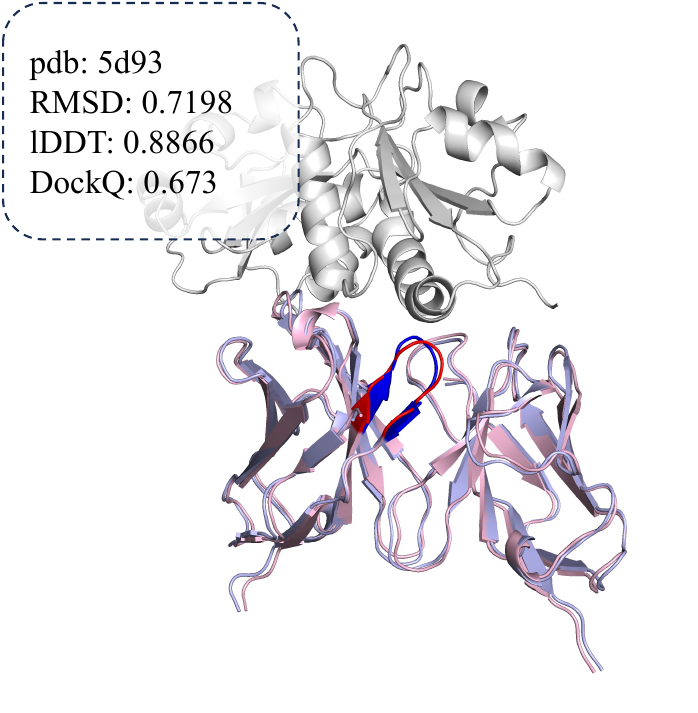} \\
    \end{tabular}
    \vspace{-2mm}
    \caption{Additional antibody structures generated by Igformer.}
    \label{fig:fig6-appendix-case}
    \vspace{-4mm}
  \end{small}
\end{figure}

\end{document}